\setlist[itemize]{noitemsep}
\setlist[enumerate]{noitemsep}
\newtheorem{theorem}{Theorem}
\newtheorem{lemma}[theorem]{Lemma}
\newtheorem{claim}{Claim}
\newtheorem{proposition}[theorem]{Proposition}
\newtheorem{observation}{Observation}
\theoremstyle{definition}
\newtheorem{remark}{Remark}
\newcommand{\Oh}{\mathcal{O}}
\newenvironment{claimproof}{\begin{proof}\renewcommand{\qedsymbol}{\claimqed}}{\end{proof}\renewcommand{\qedsymbol}{\plainqed}}
\let\plainqed\qedsymbol
\tikzset{
  circ/.style = {circle,draw,fill,inner sep=1.3pt},
  cir/.style = {circle,draw,fill,inner sep=1pt},
  pt/.style = {circle,draw,fill,inner sep=.5pt},
  circr/.style = {circle,draw=red,fill=red,inner sep=1.3pt},
  scirc/.style = {circle,draw,fill,inner sep=.5pt},
  invisible/.style = {draw=none,inner sep=0pt,font=\tiny},
  nonedge/.style={decorate,decoration={snake,amplitude=.3mm,segment length=1mm},draw}
}
\newcommand\yes{\textsc{Yes}}
\newcommand\no{\textsc{No}}
\newcommand\tdom{\textsc{Total Dominating Set}}
\newcommand\contracd{\textsc{1-Edge Contraction($\gamma$)}}
\newcommand\contractd{\textsc{1-Edge Contraction($\gamma_{t}$)}}
\newcommand{\set}[1]{\ensuremath{ \left\lbrace  #1 \right\rbrace }}
\DeclareMathOperator*{\argmax}{arg\,max}
\title{Blocking total dominating sets via edge contractions}
\date{}
\author{E. Galby, F. Mann, B. Ries}
\affil{University of Fribourg, Department of Informatics,\\
	Fribourg, Switzerland}
\begin{document}
\maketitle

\begin{abstract}
In this paper, we study the problem of deciding whether the total domination number of a given graph $G$ can be reduced using exactly one edge contraction (called \contractd\ ). We focus on several graph classes and determine the computational complexity of this problem. By putting together these results, we manage to obtain a complete dichotomy for $H$-free graphs (see Theorem 3).
%In this work, we study the following problem: given a connected graph $G$, can we reduce the domination number of $G$ by at least one using $k$ edge contractions, for some fixed integer $k\geq 0$? We show that for $k=1$ (resp.\ $k=2$), the problem is \NP-hard (resp.\ \coNP-hard). We further prove that for $k=1$, the problem is \W[1]-hard parameterized by domination number plus the mimwidth of the input graph, and that it remains \NP-hard when restricted to chordal $\{P_6,P_4+P_2\}$-free graphs, bipartite graphs and $\{C_3,\ldots,C_{\ell}\}$-free graphs for any $\ell \geq 3$. We also show that for $k=1$, the problem is \coNP-hard on subcubic claw-free graphs, subcubic planar graphs and on $2P_3$-free graphs. On the positive side, we show that for any $k \geq 1$, the problem is polynomial-time solvable on $(P_5+pK_1)$-free graphs for any $p\geq 0$ and that it can be solved in $\mathsf{FPT}$-time and $\mathsf{XP}$-time when parameterized by treewidth and mimwidth, respectively. Finally, we start the study of the problem of reducing the domination number of a graph via vertex deletions and edge additions and, in this case, present a complexity dichotomy on $H$-free graphs.
\end{abstract}

%============================================================================

\section{Introduction}

In this paper, we consider the problem of reducing the total domination number of a graph by contracting a single edge. More precisely, given a graph $G=(V,E)$, we want to know whether there exists an edge $e\in E$ such that the total domination number of the graph $G'$, obtained from $G$ by contracting the edge $e$, is strictly less than the total domination number of $G$. This problem fits into the general framework of so-called \textit{blocker problems} which have been studied intensively in the literature (see for instance  \cite{BTT11,bazgan2013critical,RBPDCZ10,Bentz,CWP11,DPPR15,diner2018contraction,PAL-GAL-RIE,PBP,PPR16,paulusma2017blocking,paulusma2018critical}). In this framework, we ask for a specific graph parameter $\pi$ to decrease: given a graph $G$, a set $\mathcal{O}$ of one or more graph operations and an integer~$k\geq 1$, the question is whether $G$ can be transformed into a graph $G'$ by using at most $k$ operations from $\mathcal{O}$ such that $\pi(G')\leq \pi(G)-d$ for some {\it threshold} $d\geq 0$. Such problems are called blocker problems as the set of vertices or edges involved can be viewed as ``blocking'' the parameter $\pi$. Identifying such sets may provide important information about the structure of the graph~$G$. Blocker problems can be seen as a kind of graph modification problems. Indeed, in such problems we are usually interested in modifying a given graph $G$, via a small number of operations, into some other graph $G'$ that has a certain desired property which often describes a certain graph class to which $G'$ must belong. Here we consider graph parameters instead of graph classes.

Blocker problems are also related to other well-known graph problems as shown for instance in \cite{DPPR15,PPR16}. So far, the literature mainly focused on the following graph parameters:  the chromatic number, the independence number, the clique number, the matching number, the vertex cover number and the domination number. Furthermore, the set $\mathcal{O}$ usually consisted of a single graph operation, namely either vertex deletion, edge contraction, edge deletion or edge addition. Since these blocker problems are usually $\mathsf{NP}$-hard in general graphs, a particular attention has been paid to their computational complexity when restricted to special graph classes.

Recently, the authors in \cite{PAL-GAL-RIE} studied the blocker problem with respect to the domination number and edge contractions. More precisely, they consider the problem of deciding if for a given connected graph $G$ it is possible to obtain a graph $G'$ by contracting at most $k$ edges, where $k\geq 0$ is fixed, such that $\gamma(G')~\leq~\gamma(G) -1$, where $\gamma$ represents the domination number?  For $k=1$, they provided an almost dichotomy (only one family of graphs remained open) for this problem when restricted to $H$-free graphs. Very recently, this last open problem was solved as well (see \cite{TODO}). In this paper, we continue this line of research by considering the total domination number.

More specifically, let $G=(V,E)$ be a graph. The {\it contraction} of an edge $uv\in E$ removes vertices $u$ and $v$ from $G$ and replaces them by a new vertex that is made adjacent to precisely those vertices that were adjacent to $u$ or $v$ in $G$ (without introducing self-loops nor multiple edges). A graph obtained from $G$ by contracting an edge $e$ will be denoted by $G/e$. A set $D\subseteq V$ is called a \textit{dominating set}, if every vertex in $V\setminus D$ has at least one neighbor in $D$. The \textit{domination number} of a graph $G$, denoted by $\gamma(G)$, is the smallest size of a dominating set in $G$. A set $D\subseteq V$ is called a \textit{total dominating set}, if every vertex in $V$ has at least one neighbor in $D$. The \textit{total domination number} of a graph $G$, denoted by $\gamma_t(G)$, is the smallest size of a total dominating set in $G$. We consider the following problem in this paper:

\begin{center}
\begin{boxedminipage}{.99\textwidth}
\textsc{1-Edge Contraction($\gamma_{t}$)}\\[2pt]
\begin{tabular}{ r p{0.8\textwidth}}
\textit{~~~~Instance:} &A connected graph $G=(V,E)$.\\
\textit{Question:} &Does there exist an edge $e\in V$ such that $\gamma_t(G/e)\leq \gamma_t(G)-1$?
\end{tabular}
\end{boxedminipage}
\end{center}

The problem of reducing domination parameters was first considered by Huang and Xu in \cite{HX10}. The authors denote by $ct_{\gamma_t}(G)$ the minimum number of edge contractions required to transform a given graph $G$ into a graph $G'$ such that $\gamma_t (G') \leq \gamma_t (G) -1$. They prove that for a connected graph $G$, we have $ct_{\gamma_t} (G)\leq 3$. In other words, one can always reduce by at least $1$ the total domination number of a connected graph $G$ by using at most $3$ edge contractions (\cite[Theorem 4.3]{HX10}). They also prove the following theorem, which is a crucial result for our work.

\begin{theorem}[\cite{HX10}]\label{theorem:1totcontracdom}
For a connected graph $G$, $ct_{\gamma_t} (G)=1$ if and only if there exists a minimum total dominating $D$ set in $G$ such that the graph induced by $D$ contains a $P_3$.
\end{theorem}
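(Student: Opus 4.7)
The plan is to prove both directions of the equivalence, splitting the forward direction into two cases depending on whether the contracted vertex lies in a minimum total dominating set of $G/e$.

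For the backward direction, I start from a minimum total dominating set $D$ of $G$ with a path $abc$ in $G[D]$, contract the edge $ab$ to produce a new vertex $w$, and argue that $(D \setminus \{a,b\}) \cup \{w\}$ is a total dominating set of $G/ab$ of size $|D|-1$. The verification is a routine case analysis: every dominator in $D \setminus \{a,b\}$ still dominates the same vertices, dominators that were $a$ or $b$ are replaced by $w$, and $w$ itself has the neighbor $c$ in the new set, which is precisely where the $P_3$-hypothesis enters.

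For the forward direction, let $e=uv$ realize $\gamma_t(G/e) \leq \gamma_t(G)-1$, let $D'$ be a minimum total dominating set of $G/e$, and let $w$ denote the contracted vertex. If $w \in D'$, I set $D := (D'\setminus\{w\}) \cup \{u,v\}$: this has size $|D'|+1 \leq \gamma_t(G)$, and once I verify it is a total dominating set of $G$, minimality forces $|D| = \gamma_t(G)$; the $P_3$ comes from any neighbor $z \in D'$ of $w$, which must be adjacent to $u$ or $v$ in $G$, so (up to swapping $u$ and $v$) the sequence $z, u, v$ is a path in $G[D]$. If $w \notin D'$, I pick a neighbor $z \in D'$ of $w$ in $G/e$, assume after swapping $u,v$ if needed that $z \sim u$ in $G$, set $D := D' \cup \{u\}$, and verify that this is a minimum total dominating set of $G$; then any neighbor $y \in D'$ of $z$ in $G/e$ (which exists by total domination, is distinct from $u,v$, and remains adjacent to $z$ in $G$ since $y \neq w$) yields the path $y, z, u$ in $G[D]$.

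The main obstacle is to ensure, in each subcase of the forward direction, that the constructed set $D$ is a \emph{total} dominating set of $G$: one must check not only that every vertex outside $D$ has a neighbor in $D$, but also that every vertex of $D$ (notably the newly added $u$ or $v$) does. When $w \in D'$ the edge $uv$ handles mutual domination of $u$ and $v$, but when $w \notin D'$ one must exploit the specific choice $z \sim u$ to dominate $u$; this is precisely why the side condition on $z$ is imposed. A useful preliminary remark is that $\gamma_t(G) \geq 3$ whenever $ct_{\gamma_t}(G) = 1$, since any total dominating set has at least two vertices.
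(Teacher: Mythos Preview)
The paper does not prove this theorem; it quotes it from Huang and Xu and uses it as a black box throughout, so there is no in-paper proof to compare against.

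Your argument is correct and is essentially the standard one. The backward direction is routine. In the forward direction your two-case split is the right organisation, and you handle the two delicate points cleanly: when $w\in D'$ the edge $uv$ gives mutual domination of $u$ and $v$, and the neighbour $z\in D'$ of $w$ supplies the third vertex of the $P_3$; when $w\notin D'$ you correctly use $uv$ to dominate $v$ after inserting $u$, and you observe that the neighbour $y\in D'$ of $z$ cannot equal $w$, so the adjacency $yz$ persists in $G$ and $y,z,u$ is the required $P_3$ in $G[D]$. The size count $|D'|+1\le\gamma_t(G)$ together with $D$ being a total dominating set forces minimality in both cases.
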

 
As mentioned above, the authors in \cite{PAL-GAL-RIE} considered the domination number, i.e. they considered the problem above but with $\gamma(G)$ instead of $\gamma_t(G)$ denoted by \textsc{1-Edge Contraction($\gamma$)}. In particular they showed that if $H$ is not an induced subgraph of $P_3+pP_2+tK_1$, for $p\geq 1$ and $t\geq 0$ then \textsc{1-Edge Contraction($\gamma$)} is polynomial-time solvable on $H$-free graphs if and only if $H$ is an induced subgraph of $P_5+tK_1$, for $t\geq 0$. Recently, it was shown that the problem can be solved in polynomial time in $H$-free graphs when $H$ is an induced subgraph of $P_3+pP_2+tK_1$, for $p,t\geq 0$ (see \cite{TODO}). Thus, we have the following dichotomy.

\begin{theorem}
\label{thm:dichotomy}
\contracd{} is polynomial-time solvable for $H$-free graphs if and only if $H$  is an induced subgraph of $P_5 + tK_1$ with $t \geq 0$, or $H$ is an induced subgraph of $P_3 + pK_2 + tK_1$ with $p,t\geq 0$.
\end{theorem}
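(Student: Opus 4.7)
The plan is to deduce this dichotomy by assembling the two results already cited in the introduction: the almost-dichotomy of \cite{PAL-GAL-RIE} and the recently closed remaining case of \cite{TODO}. Since the statement is an ``if and only if'', I would proceed in two directions, each of which is essentially a look-up in the appropriate source.

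For the tractable direction, suppose $H$ is an induced subgraph of $P_5 + tK_1$ or of $P_3 + pK_2 + tK_1$. In the first case, the polynomial-time algorithm on $H$-free graphs is already furnished by \cite{PAL-GAL-RIE}, since any $H$-free graph is in particular $(P_5 + tK_1)$-free. In the second case, a polynomial-time algorithm is provided by \cite{TODO}. These two cases are not disjoint, but this is harmless: their overlap (graphs of the form $P_3 + tK_1$) is handled by either reference and the conclusion is the same.

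For the hard direction, I would take the contrapositive and assume that $H$ is neither an induced subgraph of $P_5 + tK_1$ nor of $P_3 + pK_2 + tK_1$ for any admissible parameters. Observe that $P_3 + 0\cdot K_2 + tK_1 = P_3 + tK_1$ is an induced subgraph of $P_5 + tK_1$; hence the failure of the second membership for all $p,t \geq 0$ forces in particular that no $p \geq 1$ and $t \geq 0$ satisfy $H \subseteq_i P_3 + pP_2 + tK_1$ (using $pP_2 = pK_2$). The hypothesis of the almost-dichotomy in \cite{PAL-GAL-RIE} is therefore met, and combined with $H \not\subseteq_i P_5 + tK_1$ it yields $\mathsf{NP}$-hardness of \contracd{} on $H$-free graphs.

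The main obstacle in this assembly is not algorithmic but purely bookkeeping: one has to check carefully that the two previously known results jointly exhaust every possibility for $H$ and that the two tractable families together match exactly the description stated in the theorem. The subgraph inclusion $P_3 + tK_1 \subseteq_i P_5 + tK_1$ together with the identity $pP_2 = pK_2$ is what glues the two sources together without leaving any residual case.
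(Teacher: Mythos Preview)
Your proposal is correct and matches the paper's own treatment: Theorem~\ref{thm:dichotomy} is not proved from scratch in the paper but is stated as the combination of the almost-dichotomy of \cite{PAL-GAL-RIE} with the closing of the remaining $P_3+pP_2+tK_1$ case in \cite{TODO}, exactly as you assemble it. Your bookkeeping is slightly more elaborate than necessary (the observation $P_3+tK_1\subseteq_i P_5+tK_1$ is not really needed once you note $P_2=K_2$ and that the $p\geq 0$ family contains the $p\geq 1$ family), but nothing is wrong.
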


In this paper, we provide a complete dichotomy for \textsc{1-Edge Contraction($\gamma_{t}$)} in $H$-free graphs. Our main result is as follows.

\begin{theorem}
\label{thm:dichotomyTD}
\contractd{} is polynomial-time solvable for $H$-free graphs if and only if $H$ is an induced subgraph of $P_5 + tK_1$ with $t \geq 0$, or $H$ is an induced subgraph of $P_4 + qP_3 + pK_2 + tK_1$ with $q,p,t \geq 0$.
\end{theorem}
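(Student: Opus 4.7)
The plan is to prove the two directions of the dichotomy using Theorem~\ref{theorem:1totcontracdom} as the bridge: throughout, $G$ is a \yes-instance of \contractd{} if and only if some minimum total dominating set $D$ of $G$ satisfies $P_3 \subseteq G[D]$. For the hardness direction, I would first enumerate, by a Ramsey-style argument, the minimal forbidden induced subgraphs, namely those $H$ that are not induced subgraphs of any $P_5+tK_1$ nor of any $P_4+qP_3+pK_2+tK_1$ but whose proper induced subgraphs are. This list contains graphs with a cycle, graphs with a vertex of degree at least $3$ (such as $K_{1,3}$), the path $P_6$, and the small disjoint unions $2P_4$, $P_5+P_3$, and $P_5+K_2$. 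For each such $H$ I would provide a polynomial-time reduction from a known \NP-hard problem (typically a restriction of \dom{} or a variant of \textsc{Sat}) to \contractd{} restricted to $H$-free graphs, each reduction engineered so as to preserve the Huang--Xu witness condition.

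For the tractability direction I would give two polynomial-time algorithms. On $(P_5+tK_1)$-free graphs, I would invoke the Bacs\'o--Tuza structural theorem for connected $P_5$-free graphs (which guarantees a dominating clique or a dominating $P_3$) together with known polynomial-time procedures for computing $\gamma_t$ on this class, and handle the extra isolated vertices in $H$ by a standard argument separating the case where $\alpha(G)$ is bounded from the case where a large independent set can be extracted and the remainder reduces back to the $P_5$-free setting. On $(P_4+qP_3+pK_2+tK_1)$-free graphs the algorithm is more delicate: starting from an arbitrary minimum total dominating set $D$, I would attempt a local exchange inside and near $D$ to produce an induced $P_3$ in $G[D']$ for some minimum total dominating set $D'$ of the same size, using $H$-freeness to rule out the obstructions to such exchanges; otherwise I would certify that no minimum total dominating set can induce a $P_3$. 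Instances in which $\gamma_t(G)$ is small, or in which $G$ itself does not contain the full forest $P_4+qP_3+pK_2+tK_1$, would be disposed of by exhaustive enumeration over bounded-size substructures.

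The main obstacle is expected to be this second algorithm. The class of $(P_4+qP_3+pK_2+tK_1)$-free graphs is considerably richer than the $P_5$-free case, admits no clean dominating-subgraph theorem in the Bacs\'o--Tuza style, and permits many structurally different minimum total dominating sets; designing exchange arguments that are both correct and efficient while handling the unbounded parameters $q$, $p$, and $t$ will require a genuinely new structural analysis, presumably organised around an induced copy of the forbidden forest and the ways a total dominating set can interact with each of its components. By contrast, the hardness reductions should adapt the techniques from \cite{PAL-GAL-RIE} to the total-domination setting in a relatively routine way, the only extra care being the need to force a $P_3$ rather than an edge inside a minimum dominating set.
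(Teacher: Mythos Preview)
Your hardness outline matches the paper's: the minimal obstructions are cycles, the claw, $P_6$, $P_5+P_2$ and $2P_4$ (your $P_5+P_3$ is redundant since it already contains $P_5+P_2$), and the paper handles exactly these via reductions from \textsc{Even Dominating Set}, \textsc{3-Sat}, \textsc{Positive Cubic 1-in-3 3-Sat}, and a $4$-subdivision argument for arbitrary girth.

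The tractability side has a genuine gap. For $P_5$-free graphs you plan to use ``known polynomial-time procedures for computing $\gamma_t$ on this class'', but no such procedure exists unless $\mathsf{P}=\mathsf{NP}$: split graphs are $P_5$-free, \dom\ is $\mathsf{NP}$-hard on split graphs, and by~\cite{semidom} the same holds for \tdom. The paper's Theorem~\ref{theorem:p5free} avoids this entirely: it proves, by a direct case analysis on shortest paths between far-apart vertices of an arbitrary minimum total dominating set, that \emph{every} connected $P_5$-free graph with $\gamma_t\geq 3$ has a minimum total dominating set containing a $P_3$. Hence the algorithm is simply to test for a dominating edge and output \no\ or \yes\ accordingly; Bacs\'o--Tuza is not used and would not obviously suffice, since a dominating clique need not be a minimum total dominating set. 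The same problem recurs in your $(P_4+qP_3+pK_2+tK_1)$-free plan, which ``starts from an arbitrary minimum total dominating set $D$'': computing one is again $\mathsf{NP}$-hard on this class (split graphs are even $(P_4+P_3)$-free). The paper's Theorem~\ref{thm:P4kP3} never computes $\gamma_t$. It locates in polynomial time a family of ``regular cliques'' outside a fixed induced $P_4+(k-1)P_3$, shows that any minimum total dominating set meets each such clique's closed neighbourhood in exactly two vertices, proves through a chain of exchange arguments that in a \no-instance some minimum total dominating set has all but $O_k(1)$ of its edges inside regular cliques, and only then brute-forces over the bounded residual part. You are right that this case is the crux, but the mechanism is a structural decomposition that makes the problem finite without ever computing $\gamma_t$, not a local exchange starting from a precomputed minimum set.
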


It has been shown in \cite{semidom} that the complexities of the problems \textsc{Dominating set} (i.e., given a graph $G$ and an integer $k\geq 0$, does there exist a dominating set of size at most $k$?) and \textsc{Total dominating set} (i.e., given a graph $G$ and an integer $k\geq 0$, does there exist a total dominating set of size at most $k$?) agree in $H$-free graphs for any graph $H$.
%and there are no known hereditary graph classes for which the complexity of \textsc{semitotal dominating set} and \textsc{total dominating set} are different. 
The results above show that there are not only hereditary but even monogenic graph classes (i.e. $H$-free graphs for some graph $H$) for which the complexities of the problems \textsc{1-Edge-Contraction($\gamma_t$)} and \textsc{1-Edge-Contraction($\gamma$)} differ.

This paper is organised as follows. In Section \ref{sec-prelim}, we present definitions and notations that are used throughout the paper. Section~\ref{sec-hard} is devoted to the hardness results of \contractd{} while Section~\ref{sec-easy} presents cases when \contractd{} is polynomial-time solvable. In Section~\ref{sec-proof} we put these results together to proof our main result, Theorem \ref{thm:dichotomy}. We conclude the paper by presenting final remarks and future research directions in Section~\ref{sec-conclusion}.

%============================================================================

\section{Preliminaries}
\label{sec-prelim}

Throughout this paper, we only consider graphs which are finite, simple and connected. We refer the reader to \cite{Di05} for any terminology and notation not defined here.

For $n \geq 1$, the path and cycle on $n$ vertices are denoted by $P_n$ and $C_n$ respectively. A path on $n$ vertices may also be called an \emph{$n$-path}. The \emph{claw} is the complete bipartite graph with one partition of size one and the other of size three.

Given a graph $G$, we denote by $V(G)$ its vertex set and by $E(G)$ its edge set. For any vertex $v \in V(G)$, the \emph{neighborhood of $v$ in $G$}, denoted by $N_G(v)$ or simply $N(v)$ if it is clear from the context, is the set of vertices adjacent to $v$, that is, $N_G(v) = \set{w\in V(G)\colon\, vw\in E(G)}$; the \emph{closed neighborhood of $v$ in $G$}, denoted by $N_G[v]$ or simply $N[v]$ if it is clear from the context, is the set of vertices adjacent to $v$ together with $v$, that is, $N_G[v] = N_G(v) \cup \{v\}$. For any subset $S \subseteq V(G)$, the \emph{neighborhood of $S$ in $G$}, denoted by $N_G(S)$ or simply $N(S)$ if it is clear from context, is the set $\cup_{v\in S}N(v)$, and the \emph{closed neighborhood of $S$ in $G$}, denoted by $N_G[S]$ or simply $N[S]$ if it is clear from the context, is the set $N_G(S) \cup S$. For an edge $xy \in E(G)$, we may write $N_G(xy)$ (resp. $N_G[xy]$) in place of $N_G(\set{x,y})$ (resp. $N_G[\set{x,y}]$) for simplicity. Similarly, for a family $\mathcal{F} \subseteq E(G)$ of edges, we may write $N_G(\mathcal{F})$ (resp. $N_G[\mathcal{F}]$) in place of $\cup_{e\in\mathcal{F}}N_G(e)$ (resp. $\cup_{e\in\mathcal{F}}N_G[e]$) for simplicity. Let $A,B\subseteq V(G)$. We say that $A$ is complete (resp. anticomplete) to $B$, if every vertex in $A$ is adjacent (resp. non adjacent) to every vertex in $B$. For a subset $S\subseteq V(G)$, we let $G[S]$ denote the subgraph \emph{induced} by $S$, which has vertex set $S$ and edge set $\set{xy\in E(G)\colon\, x,y\in S}$. Given a subset $S \subseteq V(G)$ and a graph $H$, we say that $S$ \emph{contains an (induced) $H$} if $G[S]$ contains $H$ as an (induced) subgraph. The \emph{length} of a path in $G$ is its number of edges. For any two vertices $u,v \in V(G)$, the \emph{distance from $u$ to $v$ in $G$}, denoted by $d_G(u,v)$ or simply $d(u,v)$ if it is clear from the context, is the length of a shortest path from $u$ to $v$ in $G$. Similarly, for any two subset $S,S' \subseteq V(G)$, the \emph{distance from $S$ to $S'$ in $G$}, denoted by $d_G(S,S')$ or simply $d(S,S')$ if it is clear from the context, is the minimum length of a shortest from a vertex in $S$ to a vertex in $S'$, that is, $d_G(S,S') = min_{x \in S,y\in S'} d_G(x,y)$. If $S$ consists of a single vertex, say $S=\{x\}$, we may write $d_G(x,S')$ in place of $d_G(\{x\},S')$ for simplicity. The $k$-subdivision of an edge $uw \in E(G)$ consists in replacing it with a path $uv_1\ldots,v_kw$, where $v_1,\ldots,v_k$ are new vertices. 

A subset $K \subseteq V(G)$ is a \emph{clique} of $G$ if any two vertices of $K$ are adjacent in $G$. A subset $S \subseteq V(G)$ is an \emph{independent set} of $G$ if any two vertices of $S$ are nonadjacent in $G$. Given two subsets $S, S'\subseteq V(G)$, we say that $S$ \emph{dominates} $S'$ if $N(v)\cap S\neq\varnothing$ for every $v\in S'$. Given a vertex $v\in V(G)$ and a subset $S\subseteq V(G)$, we say that the vertex $v$ dominates $S$ if the set $\set{v}$ dominates $S$. If $D$ is a total dominating set of $G$ and $v\in D$, we say that a vertex $w\in V(G)$ is a \emph{private neighbour of $v$ with respect to $D$}, or simply a \emph{private neighbor of $v$} if it is clear from the context, if $N(w)\cap D=\set{v}$. For a subset $S\subseteq V(G)$, we say that a vertex $w\in V(G)\setminus S$ is a \emph{private neighbour of $S$ with respect to $D$}, or simply a \emph{private neighbor of $S$} if it is clear from the context, if $N(w)\cap D\subseteq S$ and $\vert N(w)\cap D\vert=1$. A subset $D \subseteq V(G)$ is a \emph{dominating set} of $G$ if every vertex in $V(G) \setminus D$ is adjacent to at least one vertex in $D$; the \emph{domination number of $G$}, denoted by $\gamma(G)$, is the size of a minimum dominating set of $G$. The \textsc{(Even) Dominating Set} problem takes as input a graph $G$ and an (even) integer $k$, and asks whether $G$ has a dominating set of size at most $k$. The $\mathsf{NP}$-hardness of these two problems follows from \cite{garey}.

For a family $\{H_1,\ldots,H_p\}$ of graphs, we say that $G$ is $\{H_1,\ldots,H_p\}$-free if $G$ contains no induced subgraph isomorphic to a graph in $\{H_1,\ldots,H_p\}$. If $p=1$, we may write $H_1$-free in place of $\{H_1\}$-free for simplicity. The \emph{union} of two simple graphs $G$ and $H$ is the graph $G + H$ with vertex set $V(G) \cup V(H)$ and edge set $E(G) \cup E(H)$. The union of $k$ disjoint copies of $G$ is denoted by $kG$.

For $n\in\mathbb{N}$ we denote by $[n]$ the set $\set{1,\ldots,n}$.

%============================================================================

\section{Hardness results}
\label{sec-hard}

In this section, we will present several hardness results regarding \contractd{} with respect to $\mathcal{H}$-free graphs, where $\mathcal{H}$ is a family of at most two graphs.

\begin{theorem}
\label{thm:P6P5P2}
\contractd{} is $\mathsf{NP}$-hard when restricted to $\{P_6,P_5+P_2\}$-free graphs.
\end{theorem}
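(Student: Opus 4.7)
The plan is to prove $\mathsf{NP}$-hardness via a polynomial-time reduction from a carefully chosen $\mathsf{NP}$-hard problem (natural candidates being a variant of \textsc{Dominating Set}, \textsc{Set Cover}, or \textsc{3-SAT}). Given an instance $I$ of the source problem, I would construct in polynomial time a connected graph $G_I$ such that (i) $G_I$ is $\{P_6,P_5+P_2\}$-free, and (ii) $I$ is a \yes{}-instance if and only if there is an edge $e \in E(G_I)$ with $\gamma_t(G_I/e) \leq \gamma_t(G_I) - 1$. By \cref{theorem:1totcontracdom}, condition (ii) is equivalent to $G_I$ admitting a minimum total dominating set $D$ such that $G_I[D]$ contains an induced $P_3$, and it is this reformulated statement that I would actually verify.

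The graph $G_I$ would be built from three parts: a \emph{selection region} whose vertices and edges encode the combinatorial structure of $I$ (for example, one vertex per set in a set-cover instance, with membership edges to element vertices); a \emph{forcing gadget} built from pendants and near-pendants that compels a fixed number of vertices to lie in every total dominating set of $G_I$; and a small set of \emph{hub vertices} adjacent to the whole selection region. The hub keeps the diameter of $G_I$ small, which, combined with the specific structure of the selection region, allows one to rule out induced copies of $P_6$ and of $P_5+P_2$. The forcing gadget is calibrated so that $\gamma_t(G_I)=c+k$, where $c$ is a fixed constant and $k$ is the optimum value associated with $I$. Showing $\{P_6,P_5+P_2\}$-freeness is done by a case analysis on the location of a putative forbidden induced subgraph: any induced $P_6$ would need to traverse the hub, contradicting its high connectivity, and any induced $P_5$ must also pass through the hub, so every edge disjoint from such a $P_5$ is forced to have an endpoint adjacent to a vertex of the $P_5$, excluding an induced $P_5+P_2$.

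For the correctness of the reduction, the forward direction should be essentially mechanical: given a solution to $I$ of optimal value $k$, one exhibits a total dominating set of $G_I$ of size $c+k$ whose interaction with the hub creates an induced $P_3$. The main obstacle will be the converse: one has to show that \emph{any} minimum total dominating set of $G_I$ whose induced subgraph contains a $P_3$ arises from a genuine solution to $I$. This requires a tight structural analysis of minimum TDSs based on private-neighbor arguments, in order to exclude ``spurious'' minimum TDSs that induce a $P_3$ for reasons unrelated to $I$ (for instance, via different interactions with the forcing gadget). One then proves that, up to internal choices in the forcing gadget, the selection-region part of any minimum TDS must realize a candidate solution to $I$, and that the presence of a $P_3$ in $G_I[D]$ translates precisely into the validity constraints of the source problem.
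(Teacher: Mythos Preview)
Your proposal correctly identifies the overall shape of the argument---a polynomial reduction combined with \Cref{theorem:1totcontracdom}---but it is a plan, not a proof: no source problem is fixed, no construction is specified, and neither the $\{P_6,P_5+P_2\}$-freeness nor the equivalence is actually verified. In particular, your assertion that $\gamma_t(G_I)=c+k$ for a constant $c$ and the optimum $k$ of $I$ does not by itself explain why a $P_3$ appears in some minimum total dominating set \emph{exactly} when $I$ is a \yes-instance; you name this as the ``main obstacle'' but provide no mechanism for it, and a generic ``pendants plus hub'' gadget does not obviously produce one.

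The paper's construction is rather different in spirit from your selection/forcing/hub template. It reduces from \textsc{Even Dominating Set} on a graph $G$ with parameter $2\ell$ and $\gamma(G)\geq 4$. The graph $G'$ consists of a clique copy $V_0$ of $V(G)$, independent copies $V_1,\dots,V_{2\ell}$ wired to $V_0$ according to closed neighbourhoods in $G$, and vertices $x_1,\dots,x_{2\ell}$, each $x_i$ complete to $V_0\cup V_i$, with $x_{2j-1}x_{2j}$ an edge for every $j$. One shows $\gamma_t(G')=\min\{\gamma(G),2\ell\}$. The point is that when $\gamma(G)>2\ell$ the \emph{unique} minimum total dominating set is $\{x_1,\dots,x_{2\ell}\}$, which is a perfect matching and hence $P_3$-free; when $\gamma(G)\leq 2\ell$, any minimum dominating set of $G$ lifts to a minimum total dominating set of $G'$ inside the clique $V_0$, which automatically contains a $P_3$ since $\gamma(G)\geq 4$. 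So $\gamma_t(G')$ is \emph{not} of the form $c+k$; rather, which of two structurally incompatible families of minimum total dominating sets is realised depends on the answer to the source instance, and this dichotomy is what encodes the presence or absence of a $P_3$. Your sketch does not contain this idea (or an equivalent one), and without it the converse direction you flag cannot be carried out.
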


\begin{proof}
We reduce from {\sc Even Dominating Set} with domination number at least 4. Given an instance $(G,2\ell)$ (with $\gamma(G) \geq 4$) of this problem, we construct an equivalent instance $G'$ of \contractd{} as follows. Let $V(G)=\{v_1,v_2, \ldots, v_n\}$ the vertex set of $G$. The vertex set of $G'$ consists of $2\ell$ vertices $x_1,x_2,\ldots, x_{2\ell}$ and $2\ell + 1$ copies of $V(G)$, denoted by $V_0, V_1, \ldots V_{2\ell}$. For any $ 0 \leq i \leq 2\ell$, we denote the vertices of $V_i$ by $v_1^i,v_2^i,\ldots,v_n^i$. The adjacencies in $G'$ are then defined as follows (see Fig.~\ref{fig:G'gt1}):
\begin{itemize}
\item[$\cdot$] $V_0$ is a clique
\end{itemize}
and for any $1 \leq i \leq 2\ell$, 
\begin{itemize}
\item[$\cdot$] $V_i$ is an independent set;
\item[$\cdot$] for any $1\leq j \leq n$, $v_j^i$ is adjacent to $\{v_k^0,v_k \in N_G[v_j]\}$;
\item[$\cdot$] $x_i$ is adjacent to every vertex in $V_0 \cup V_i$; 
\item[$\cdot$] if $i(\text{mod }2)=1$, then $x_i$ is adjacent to $x_{i+1}$.
\end{itemize}
Note that the fact that for any $1\leq i \leq 2 \ell$ and $1 \leq j \leq n$, $v_j^i$ is adjacent to $\{v_k^0,v_k \in N_G[v_j]\}$ is not made explicit in Fig. \ref{fig:G'gt1} for the sake of readability. We now claim the following.

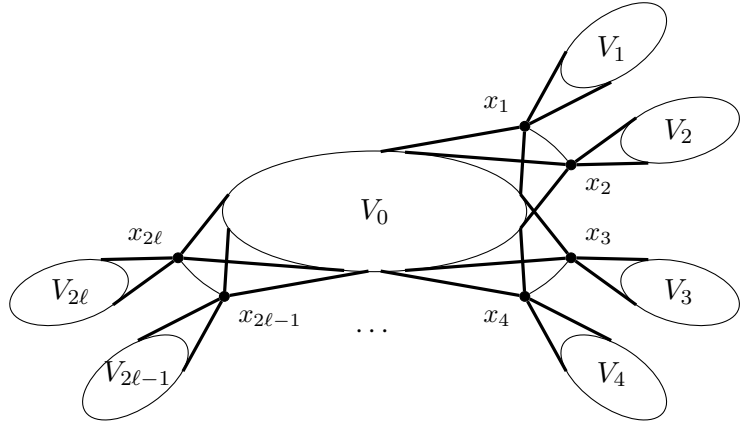
\begin{figure}[htb]
\centering
\begin{tikzpicture}[scale=.8]
\draw (0,0) ellipse (2.5cm and 1cm);
\node[draw=none] at (0,0) {$V_0$};

\node[circ,label=above left:{\small $x_1$}] (x1) at (2.47,1.41) {};
\draw[rotate=35] (4.8,0) ellipse (1cm and .5cm);
\node[draw=none] at (3.9,2.7) {$V_1$};
\draw[very thick] (x1) -- (2.4,.28)
(x1) -- (.1,.99)
(x1) -- (3.15,2.65)
(x1) -- (3.89,2.14);

\node[circ,label=below right:{\small $x_2$}] (x2) at (3.23,0.77) {};
\draw[rotate=15] (5.2,0) ellipse (1cm and .5cm);
\node[draw=none] at (5,1.33) {$V_2$};
\draw[very thick] (x2) -- (2.4,-.28)
(x2) -- (.5,.98)
(x2) -- (4.3,1.55)
(x2) -- (4.5,.8);

\node[circ,label=above right:{\small $x_3$}] (x3) at (3.23,-0.77) {};
\draw[rotate=-15] (5.2,0) ellipse (1cm and .5cm);
\node[draw=none] at (5,-1.33) {$V_3$};
\draw[very thick] (x3) -- (2.4,.28)
(x3) -- (.5,-.98)
(x3) -- (4.3,-1.55)
(x3) -- (4.5,-.8);

\node[circ,label=below left:{\small $x_4$}] (x4) at (2.47,-1.41) {};
\draw[rotate=-35] (4.8,0) ellipse (1cm and .5cm);
\node[draw=none] at (3.9,-2.7) {$V_4$};
\draw[very thick] (x4) -- (2.4,-.28)
(x4) -- (.1,-.99)
(x4) -- (3.15,-2.65)
(x4) -- (3.89,-2.14);

\node[draw=none] at (0,-2) {$\cdots$};

\node[circ,label=below right:{\small $x_{2\ell - 1}$}] (xl-1) at (-2.47,-1.41) {};
\draw[rotate=35] (-4.8,0) ellipse (1cm and .5cm);
\node[draw=none] at (-3.9,-2.7) {$V_{2\ell - 1}$};
\draw[very thick] (xl-1) -- (-2.4,-.28)
(xl-1) -- (-.1,-.99)
(xl-1) -- (-3.15,-2.65)
(xl-1) -- (-3.89,-2.14);

\node[circ,label=above left:{\small $x_{2\ell}$}] (xl) at (-3.23,-0.77) {};
\draw[rotate=15] (-5.2,0) ellipse (1cm and .5cm);
\node[draw=none] at (-5,-1.33) {$V_{2\ell}$};
\draw[very thick] (xl) -- (-2.4,.28)
(xl) -- (-.5,-.98)
(xl) -- (-4.3,-1.55)
(xl) -- (-4.5,-.8);

\draw (x1) edge[bend left=10] (x2);
\draw (x3) edge[bend left=10] (x4);
\draw (xl-1) edge[bend left=10] (xl);
\end{tikzpicture}
\caption{The graph $G'$ (thick lines indicate that the vertex $x_i$ is adjacent to every vertex of $V_0 \cup V_i$ for any $1 \leq i \leq 2\ell$).}
\label{fig:G'gt1}
\end{figure}

\begin{claim}
\label{clm:gammat}
$\gamma_t(G') = \min \{ \gamma (G), 2 \ell\}$.
\end{claim}

\begin{claimproof}
It is clear that $\{x_1,x_2,\ldots, x_{2\ell -1},x_{2\ell}\}$ is a total dominating set of $G'$; thus, $\gamma_t (G') \leq 2 \ell$. If $\gamma (G) \leq 2 \ell$ and $\{v_{i_1},v_{i_2},\ldots, v_{i_k}\}$ is a minimum dominating set of $G$, it follows from the construction above that $\{v_{i_1}^0,v_{i_2}^0,\ldots, v_{i_k}^0\}$ is a total dominating set of $G'$ (recall that $V_0$ is a clique). Thus, $\gamma_t(G') \leq \gamma (G)$ and so, $\gamma_t(G') \leq \min \{ \gamma (G), 2 \ell \}$. Now suppose that $\gamma_t(G') < \min \{ \gamma (G), 2 \ell \}$ and consider a minimum total dominating set $D$ of $G'$. Then there must exist $i \in \{1,\ldots, 2 \ell\}$ such that $x_i \not\in D$; indeed, if for all $i \in \{1,\ldots, 2 \ell\}$, $x_i \in D$ then we would have that $\vert D \vert \geq 2 \ell$, thereby contradicting the fact that $\gamma_t(G') < \min \{ \gamma (G), 2 \ell \}$. But then, $D' = D \cap (V_0 \cup V_i)$ must dominate every vertex in $V_i$ and so, $\vert D' \vert \geq \gamma (G)$. But $\vert D' \vert \leq \vert D \vert $ which implies that $\gamma (G) \leq \vert D \vert$, a contradiction. Therefore, $\gamma_t(G') = \min \{ \gamma (G), 2 \ell \}$.
\end{claimproof}

We now show that $(G,2\ell)$ (with $\gamma(G) \geq 4$) is a \yes-instance for {\sc Even Dominating Set} if and only if $G'$ is a \yes-instance for \contractd{}.\\

First assume that $\gamma (G) \leq 2 \ell$. Then by Claim \ref{clm:gammat}, $\gamma_t(G') = \gamma (G)$ and if $\{v_{i_1},v_{i_2},\ldots,v_{i_k}\}$ is a minimum dominating set of $G$ then $\{v_{i_1}^0,v_{i_2}^0,\ldots,v_{i_k}^0\}$ is a minimum total dominating set of $G'$ containing a $P_3$ (recall that $V_0$ is a clique and $\gamma(G) \geq 4$). We then conclude by Theorem~\ref{theorem:1totcontracdom} that $G'$ is a \yes-instance for \contractd{}.

Conversely, assume that $G'$ is a \yes-instance for \contractd{}, that is, there exists a minimum total dominating set $D$ of $G'$ containing a $P_3$ (see Theorem~\ref{theorem:1totcontracdom}). Then there must exist $i \in \{1,\ldots,2\ell\}$ such that $x_i \not\in D$; indeed, if for all $i \in \{1,\ldots,2\ell\}$, $x_i \in D$ then $\vert D \vert \geq 2 \ell$ and we conclude by Claim \ref{clm:gammat} that in fact equality holds. It follows that $D$ consists of $x_1,x_2,\ldots,x_{2\ell-1},x_{2\ell}$; in particular, $D$ contains no $P_3$, a contradiction. Thus, there exists $1 \leq i \leq 2\ell$ such that $x_i \notin D$ and so, $D' = D \cap (V_0 \cup V_i)$ must dominate every vertex in $V_i$. It follows that $\vert D' \vert \geq \gamma (G)$ and since $\vert D' \vert  \leq \vert D \vert$, we conclude that $\gamma (G) \leq \vert D \vert \leq 2 \ell$ by Claim~\ref{clm:gammat}, that is, $(G,2\ell)$ is a \yes-instance for {\sc Even Dominating Set}.\\    

Finally, it is easy to see that $G'$ is $P_6$-free as well as $(P_5+P_2)$-free which concludes the proof.
\end{proof}

\begin{theorem}
\label{thm:2P4} 
\contractd\ is $\mathsf{coNP}$-hard when restricted to $2P_4$-free graphs.
\end{theorem}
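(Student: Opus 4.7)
By Theorem~\ref{theorem:1totcontracdom}, a connected graph $G'$ is a \no-instance of \contractd{} precisely when every minimum total dominating set $D$ of $G'$ induces a subgraph free of induced $P_3$, equivalently a disjoint union of edges (since $G'[D]$ cannot contain isolated vertices by the totality of $D$). Proving $\mathsf{coNP}$-hardness on $2P_4$-free graphs therefore amounts to exhibiting a polynomial-time reduction from an $\mathsf{NP}$-hard problem $\Pi$ to the complement of \contractd{}: for each instance $x$ of $\Pi$, one constructs a $2P_4$-free graph $G'_x$ such that $x$ is a \yes-instance if and only if every minimum total dominating set of $G'_x$ induces a perfect matching.

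As source problem I would take a suitably constrained variant of \textsc{Dominating Set} (or \textsc{Set Cover}), mirroring the template of Theorem~\ref{thm:P6P5P2} while reversing the yes/no correspondence. The graph $G'_x$ would consist of three layers: a faithful copy $V^\ast$ of the input vertex set (realised as an independent set with the $V(G)$-adjacencies ``transported'' through a shared set of common neighbours), a \emph{matching gadget} of $k$ vertex-disjoint edges $a_1b_1,\dots,a_kb_k$ providing a canonical minimum total dominating set $M=\set{a_1,b_1,\dots,a_k,b_k}$ for which $G'_x[M]$ is exactly a perfect matching, and a small \emph{hub layer} of universal or near-universal vertices ensuring both that $\gamma_t(G'_x)=2k$ and that $G'_x$ is $2P_4$-free. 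The adjacencies between $V^\ast$ and the matching gadget would be arranged so that an alternative minimum total dominating set of size $2k$ distinct from $M$ exists if and only if the $\Pi$-predicate fails for $x$, and so that any such alternative necessarily contains three vertices forming an induced $P_3$ in $G'_x[D]$.

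The main obstacle is maintaining $2P_4$-freeness while encoding an arbitrary $\mathsf{NP}$-hard combinatorial condition. The intended strategy is to prove that every induced $P_4$ of $G'_x$ must be dominated by a vertex of the hub layer lying outside the $P_4$ itself (because such vertices are too well connected to appear inside an induced $P_4$); two vertex-disjoint induced $P_4$s would then force two essentially independent hubs, a configuration that a constant-size, internally complete hub layer would forbid by design. Once $2P_4$-freeness is established, the remaining verification splits into: (i) computing $\gamma_t(G'_x)$ exactly, by exhibiting $M$ as an MTDS of size $2k$ and ruling out smaller ones via a private-neighbour counting argument; (ii) matching \yes-instances of $\Pi$ with the non-existence of any MTDS inducing a $P_3$; and (iii) concluding by Theorem~\ref{theorem:1totcontracdom}. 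These steps parallel those in the proof of Theorem~\ref{thm:P6P5P2}, with the crucial twist that the reduction is tuned so that the combinatorial predicate aligns with the ``no $P_3$ in any MTDS'' side, yielding $\mathsf{coNP}$-hardness rather than $\mathsf{NP}$-hardness.
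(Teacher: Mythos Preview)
Your proposal is a strategy outline, not a proof: you never commit to a source problem, a concrete construction, or the adjacencies between layers, and the verification steps are stated as intentions rather than arguments. The central difficulty of a $\mathsf{coNP}$-hardness reduction here is that a \yes-instance of the source problem must force \emph{every} minimum total dominating set of the target graph to be an induced matching, not merely that one such MTDS exists. Your description (``a canonical minimum total dominating set $M$'' together with adjacencies ``arranged so that an alternative minimum total dominating set \ldots exists if and only if the $\Pi$-predicate fails'') does not explain how this universal quantification is to be achieved, and mirroring the template of Theorem~\ref{thm:P6P5P2} does not help: that reduction aligns \yes-instances of \textsc{Dominating Set} with the \emph{existence} of an MTDS containing a $P_3$, which is the opposite direction from what you need.

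The paper's reduction is from \textsc{3-Sat} and works by a different mechanism. Each variable $x$ gets a four-vertex pendant gadget with a leaf $v_x$ attached to $u_x$, which in turn is adjacent to both literal vertices $x,\bar{x}$; this forces $u_x\in D$ and $|D\cap V(G_x)|\ge 2$ for every total dominating set $D$, so $\gamma_t(G_\Phi)\ge 2|X|$. The clause vertices form a clique adjacent to the appropriate literal vertices. The key point is that $\gamma_t(G_\Phi)=2|X|$ if and only if $\Phi$ is satisfiable, and \emph{at that value} every MTDS must place exactly two vertices per gadget (namely $u_x$ and one of its neighbours) and none in the clause clique, hence is automatically an induced matching. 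Conversely, if $G_\Phi$ is a \no-instance one shows that any $P_3$-free MTDS has size exactly $2|X|$, which forces satisfiability. What controls all MTDSs simultaneously is the tight per-gadget lower bound together with the rigid local structure; this is the idea your outline is missing, and it is not clear how a \textsc{Dominating Set}-style construction with a separate ``matching gadget'' would supply it.
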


\begin{proof}
We reduce from \textsc{3-Sat} as follows. Given an instance $\Phi$ of this problem, with variable set $X$ and clause set $C$, we construct a graph $G_{\Phi}$ such that $\Phi$ is satisfiable if and only if $G_{\Phi}$ is a \no-instance for \contractd, as follows. For any variable $x \in X$, we introduce the gadget $G_x$ depicted in Figure~\ref{fig:2P4vargad} with one distinguished \textit{positive literal vertex $x$} and one distinguished \textit{negative literal vertex $\bar{x}$}. For any clause $c \in C$, we introduce a \textit{clause vertex $c$} which is made adjacent to the (positive or negative) literal vertices whose corresponding literal occurs in $c$. Finally, we add an edge between any two clause vertices so that the set of clause vertices induces a clique denoted by $K$ in the following. 

\begin{figure}[htb]
\centering
\begin{tikzpicture}
\node[cir,label=below:{\small $x$}] (x) at (0,0) {};
\node[cir,label=below:{\small $\bar{x}$}] (barx) at (1,0) {};
\node[cir,label=right:{\small $u_x$}] (ux) at (0.5,1) {};
\node[cir,label=right:{\small $v_x$}] (vx) at (.5,2) {};

\draw[-] (x) -- (barx)
(x) -- (ux)
(barx) -- (ux)
(ux) -- (vx);
\end{tikzpicture}
\caption{The variable gadget $G_x$.}
\label{fig:2P4vargad}
\end{figure}
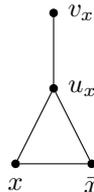

\begin{observation}
\label{obs:vargad}
For any total dominating set $D$ of $G_{\Phi}$ and any variable $x \in X$, $\vert D \cap V(G_x) \vert \geq 2$ and $u_x \in D$. In particular, $\gamma_t(G_{\Phi}) \geq 2 \vert X \vert$.
\end{observation} 	

Indeed, since $v_x$ should be dominated, necessarily $u_x \in D$ and since $u_x$ should be dominated, $D \cap \{v_x,x,\bar{x}\} \neq \varnothing$.\\

\begin{claim}
\label{clm:2P4phisat}
$\Phi$ is satisfiable if and only if $\gamma_t(G_{\Phi}) = 2 \vert X \vert$.
\end{claim}

\begin{claimproof}
Assume that $\Phi$ is satisfiable and consider a truth assignment satisfying $\Phi$. We construct a total dominating set of $G_{\Phi}$ as follows. For any variable $x \in X$, if $x$ is true then we add $x$ and $u_x$ to $D$; otherwise, we add $\bar{x}$ and $u_x$ to $D$. Clearly, $D$ is a total dominating set as every clause is satisfied and we conclude by Observation~\ref{obs:vargad} that $D$ is minimum.

Conversely, assume that $\gamma_t(G_{\Phi}) = 2 \vert X \vert$ and consider a minimum total dominating set $D$ of $G_{\Phi}$. First observe that by Observation~\ref{obs:vargad}, $\vert D \cap V(G_x) \vert = 2$ and $\vert D \cap \{x,\bar{x}\} \vert \leq 1$ for any $x \in X$, which implies in particular that $D \cap K = \varnothing$. It follows that for any clause vertex $c$, there must exist $x \in X$ such that the (positive or negative) literal vertex whose corresponding literal occurs in $c$ belongs to $D$. We may thus construct a truth assignment satisfying $\Phi$ as follows. For any variable $x \in X$, if the positive literal vertex $x$ belongs to $D$ then we set $x$ to true; if the negative literal vertex $\bar{x}$ belongs to $D$ then we set $x$ to false; otherwise, we set $x$ to true.
\end{claimproof}

\begin{claim}
\label{clm:2P4contractd}
$\gamma_t(G_{\Phi}) = 2 \vert X \vert$ if and only if $G_{\Phi}$ is a \no-instance for \contractd.
\end{claim}

\begin{claimproof}
Assume that $\gamma_t(G_{\Phi}) = 2 \vert X \vert$ and let $D$ be a minimum total dominating set of $G_{\Phi}$. Then by Observation~\ref{obs:vargad}, $\vert D \cap V(G_x) \vert = 2$ for any $x \in X$ which implies in particular that $D \cap K = \varnothing$. But then, it is clear that $D$ contains no $P_3$, and hence $G_{\Phi}$ is a \no-instance for \contractd according to Theorem \ref{theorem:1totcontracdom}.

Conversely, assume that $G_{\Phi}$ is a \no-instance for \contractd\ and consider a minimum total dominating set $D$ of $G_{\Phi}$. First observe that since $D$ contains no $P_3$ (see Theorem \ref{theorem:1totcontracdom}), necessarily $\vert D \cap V(G_x) \vert \leq 2$ for any $x \in X$; we then conclude by Observation~\ref{obs:vargad} that in fact equality holds for any $x \in X$. We now claim that $D \cap K = \varnothing$. Indeed, suppose to the contrary that there exists a clause vertex $c$ such that $c \in D$ and consider a variable $x$ occuring in $c$, say $x$ occurs positive in $c$ without loss of generality. Then $(D \setminus \{v_x,\bar{x}\}) \cup \{x\}$ is a minimum total dominating set containing a $P_3$, a contradiction. Thus, $D \cap K = \varnothing$ and so, $\vert D \vert = 2 \vert X \vert$.
\end{claimproof}

Now by combining Claims~\ref{clm:2P4phisat} and \ref{clm:2P4contractd}, we obtain that $\Phi$ is satisfiable if and only if $G_{\Phi}$ is a \no-instance for \contractd. Since $G_{\Phi}$ is obviously $2P_4$-free, this concludes the proof.
\end{proof}

\begin{theorem}
\label{thm:tdclaw}
\contractd is $\mathsf{coNP}$-hard when restricted to claw-free graphs.
\end{theorem}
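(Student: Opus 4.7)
The plan is to reduce from \textsc{3-Sat} to the complement of \contractd, mirroring the scheme of Theorem~\ref{thm:2P4} but with gadgets redesigned to avoid any induced $K_{1,3}$. Given a 3-CNF formula $\Phi$ with variable set $X$ and clause set $C$, I would construct a claw-free graph $G_{\Phi}$ such that $\Phi$ is satisfiable if and only if $G_{\Phi}$ is a \no-instance for \contractd; by Theorem~\ref{theorem:1totcontracdom}, this reduces to showing that $\Phi$ is satisfiable if and only if no minimum total dominating set of $G_{\Phi}$ induces a $P_3$.

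For each variable $x\in X$ I would use a small claw-free gadget containing two distinguished literal vertices $x,\bar{x}$ together with auxiliary vertices, designed so that every total dominating set $D$ of $G_{\Phi}$ satisfies $\vert D\cap V(G_x)\vert \ge 2$ and so that in an optimal configuration exactly one of $\{x,\bar{x}\}$ together with one common auxiliary vertex is chosen, thereby encoding the truth value of $x$. To keep the gadget claw-free, the neighborhood of the auxiliary vertex would contain the edge $x\bar{x}$, replacing the independent triple that appears in the gadget of Theorem~\ref{thm:2P4}. For the clauses, the obvious construction of attaching a single clause vertex to three independent literal vertices is the main source of claws, so I would replace each clause $c=\ell_1\vee\ell_2\vee\ell_3$ by a triangle $\{c^1,c^2,c^3\}$ with $c^j$ adjacent only to the representative of $\ell_j$. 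To avoid claws centred at a literal vertex shared by several clauses, I would introduce one copy of each literal per clause occurrence and group all positive (resp.\ negative) copies of each variable into cliques, joined appropriately to the variable gadget; this clique blow-up of literal occurrences is the standard route to claw-freeness in such reductions.

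With these gadgets in place, I would then establish, in the spirit of Claims~\ref{clm:2P4phisat} and~\ref{clm:2P4contractd}: an observation giving a tight lower bound on $\vert D\cap V(G_x)\vert$ for every total dominating set $D$; the equivalence $\gamma_t(G_{\Phi})=N$ iff $\Phi$ is satisfiable, for an explicit value $N$ arising as the sum of the minimum contributions of the variable gadgets; and the equivalence $\gamma_t(G_{\Phi})=N$ iff $G_{\Phi}$ is a \no-instance for \contractd, since a TDS of size exactly $N$ decomposes into disjoint pairs of adjacent vertices containing no $P_3$, whereas any strictly larger minimum TDS is forced to include a clause-triangle vertex, which together with its two triangle neighbours (or with one of the literal-clique vertices adjacent to it) induces a $P_3$. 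The main obstacle lies in this last equivalence: a delicate exchange argument is needed to show that whenever $\Phi$ is unsatisfiable no minimum TDS can avoid both clause-triangle vertices and $P_3$-inducing sub-configurations at once, and, conversely, that the satisfying-assignment TDS of size $N$ is genuinely $P_3$-free despite the many cliques introduced to enforce claw-freeness.
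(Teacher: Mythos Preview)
Your high-level strategy---reduce from a satisfiability problem and argue that $\Phi$ is satisfiable $\Leftrightarrow$ $\gamma_t(G_\Phi)=N$ $\Leftrightarrow$ no minimum total dominating set of $G_\Phi$ contains a $P_3$---is exactly the paper's scheme. But the concrete gadgets you propose do not support these equivalences, and the paper's construction differs from yours in ways that are not cosmetic.

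The paper reduces from \textsc{Positive Cubic 1-In-3 3-Sat}, not plain \textsc{3-Sat}. The ``cubic'' restriction (each variable in exactly three clauses) is what permits an explicit, bounded-size, claw-free variable gadget; with unbounded occurrences, your clique blow-up makes the literal cliques arbitrarily large, and any configuration that places three or more TDS vertices inside such a clique already contains a $P_3$, which breaks the intended ``satisfiable $\Rightarrow$ $P_3$-free optimum'' direction. The ``1-in-3'' restriction is equally essential: the paper's clause gadget splits into a true part $G_c^T$ and a false part $G_c^F$, and the induced-matching structure of a $P_3$-free optimum is what forces \emph{exactly one} literal through the $T$-side and the remaining two through the $F$-side. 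Your bare triangle does not encode the clause: to totally dominate $\{c^1,c^2,c^3\}$ with no triangle vertex in $D$ one needs \emph{all three} adjacent literal copies in $D$, not one; if instead a triangle vertex $c^j$ is in $D$ together with its incident literal copy, that edge is immediately adjacent to whatever variable-gadget vertex records the truth value, yielding a $P_3$ in the satisfiable case; and if two triangle vertices lie in $D$, the clause is self-dominated irrespective of the assignment, so satisfiability is no longer encoded at all.

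The paper's gadgets are accordingly far more elaborate than anything you sketch: each variable gadget contributes $14$ vertices and each clause gadget $8$ to the target $N$, with pendant vertices forcing specific anchors into $D$ and five-vertex connector paths between variable and clause gadgets. These connectors are what rigidify the picture: the backward direction of Claim~\ref{clm:GPhi} (no minimum TDS contains a $P_3$ $\Rightarrow$ $\gamma_t=N$) is a lengthy exchange argument that repeatedly exploits them to rule out every local overcount, and there is no apparent way to shortcut it with the simpler gadgets you describe.
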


\begin{proof}
We reduce from \textsc{Positive Cubic 1-In-3 3-Sat} which was shown to be $\mathsf{NP}$-hard in~\cite{1IN3}. It is a variant of the \textsc{3-Sat} problem where each variable occurs only nonnegated and in exactly three clauses, and the formula is satisfiable if and only if there exists a truth assignment to the variables such that each clause has exactly one true literal. Given an instance $\Phi$ of this problem, with variable set $X$ and clause set $C$, we contruct a graph $G_{\Phi}$ such that $\Phi$ is satisfiable if and only if $G_{\Phi}$ is a \no-instance for \contractd, as follows. For each variable $x \in X$ contained in clauses $c,c'$ and $c''$, we introduce the gadget $G_x$ depicted in Figure~\ref{fig:vargadget}. For each clause $c \in C$ containing variables $x,y$ and $z$, we introduce the gadget $G_c$ which is the disjoint union of the graphs $G_c^T$ and $G_c^F$ depicted in Figure~\ref{fig:clausegadget}; then for all $\ell \in \{x,y,z\}$, we add an edge between $t^\ell_c$ and $t^c_\ell$, and $f^\ell_c$ and $f^c_\ell$.

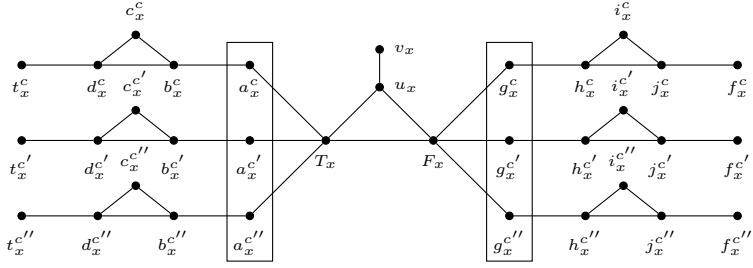
\begin{figure}[htb]
\centering
\begin{tikzpicture}[node distance=1cm]
\node[cir,label=right:{\tiny $u_x$}] (ux) at (0,0) {};
\node[cir,label=right:{\tiny $v_x$}] at ($(ux) + (0,.5)$) (vx) {};
\node[cir,below left of=ux,label=below:{\tiny $T_x$}] (Tx) {};
\node[cir,below right of=ux,label=below:{\tiny $F_x$}] (Fx) {};

\node[cir,left of=Tx,label=below:{\tiny $a^{c'}_x$}] (b1) {};
\node[cir,above of=b1,label=below:{\tiny $a^c_x$}] (a1) {};
\node[cir,below of=b1,label=below:{\tiny $a^{c''}_x$}] (c1) {};

\node[cir,left of=a1,label=below:{\tiny $b^c_x$}] (d1) {};
\node[cir,left of=d1,label=below:{\tiny $d^c_x$}] (j1) {};
\node[cir,label=above:{\tiny $c^c_x$}] at ($(j1) + (.5,.4)$) (g1) {};
\node[cir,left of=j1,label=below:{\tiny $t^c_x$}] (t1x) {};

\node[cir,left of=b1,label=below:{\tiny $b^{c'}_x$}] (e1) {};
\node[cir,left of=e1,label=below:{\tiny $d^{c'}_x$}] (k1) {};
\node[cir,label=above:{\tiny $c^{c'}_x$}] at ($(k1) + (.5,.4)$) (h1) {};
\node[cir,left of=k1,label=below:{\tiny $t^{c'}_x$}] (t2x) {};

\node[cir,left of=c1,label=below:{\tiny $b^{c''}_x$}] (f1) {};
\node[cir,left of=f1,label=below:{\tiny $d^{c''}_x$}] (l1) {};
\node[cir,label=above:{\tiny $c^{c''}_x$}] at ($(l1) + (.5,.4)$) (i1) {};
\node[cir,left of=l1,label=below:{\tiny $t^{c''}_x$}] (t3x) {};

\node[cir,right of=Fx,label=below:{\tiny $g^{c'}_x$}] (b2) {};
\node[cir,above of=b2,label=below:{\tiny $g^c_x$}] (a2) {};
\node[cir,below of=b2,label=below:{\tiny $g^{c''}_x$}] (c2) {};

\node[cir,right of=a2,label=below:{\tiny $h^c_x$}] (d2) {};
\node[cir,right of=d2,label=below:{\tiny $j^c_x$}] (j2) {};
\node[cir,label=above:{\tiny $i^c_x$}] at ($(d2) + (.5,.4)$) (g2) {};
\node[cir,right of=j2,label=below:{\tiny $f^c_x$}] (f1x) {};

\node[cir,right of=b2,label=below:{\tiny $h^{c'}_x$}] (e2) {};
\node[cir,right of=e2,label=below:{\tiny $j^{c'}_x$}] (k2) {};
\node[cir,label=above:{\tiny $i^{c'}_x$}] at ($(e2) + (.5,.4)$) (h2) {};
\node[cir,right of=k2,label=below:{\tiny $f^{c'}_x$}] (f2x) {};

\node[cir,right of=c2,label=below:{\tiny $h^{c''}_x$}] (f2) {};
\node[cir,right of=f2,label=below:{\tiny $j^{c''}_x$}] (l2) {};
\node[cir,label=above:{\tiny $i^{c''}_x$}] at ($(f2) + (.5,.4)$) (i2) {};
\node[cir,right of=l2,label=below:{\tiny $f^{c''}_x$}] (f3x) {};

\draw[-] (ux) -- (vx)
(ux) -- (Tx) 
(ux) -- (Fx)
(Tx) -- (Fx)
(Tx) -- (a1)
(Tx) -- (b1)
(Tx) -- (c1)
(a1) -- (d1)
(d1) -- (g1)
(d1) -- (j1)
(g1) -- (j1)
(j1) -- (t1x)
(b1) -- (e1)
(e1) -- (h1)
(e1) -- (k1)
(h1) -- (k1)
(k1) -- (t2x)
(c1) -- (f1)
(f1) -- (i1)
(f1) -- (l1)
(i1) -- (l1)
(l1) -- (t3x) 
(Fx) -- (a2)
(Fx) -- (b2)
(Fx) -- (c2)
(a2) -- (d2)
(d2) -- (g2)
(d2) -- (j2)
(g2) -- (j2)
(j2) -- (f1x)
(b2) -- (e2)
(e2) -- (h2)
(e2) -- (k2)
(h2) -- (k2)
(k2) -- (f2x)
(c2) -- (f2)
(f2) -- (i2)
(f2) -- (l2)
(i2) -- (l2)
(l2) -- (f3x); 

\draw ($(c1) + (-.3,-.6)$) rectangle ($(a1) + (.3,.3)$);
\draw ($(c2) + (-.3,-.6)$) rectangle ($(a2) + (.3,.3)$);
\end{tikzpicture}
\caption{The variable gadget $G_x$ for a variable $x$ contained in clauses $c,c'$ and $c''$ (a rectangle indicates that the corresponding set of vertices induces a clique).}
\label{fig:vargadget}
\end{figure}

\begin{figure}[htb]
\centering
\begin{subfigure}[b]{.45\textwidth}
\centering
\begin{tikzpicture}[node distance=1cm]
\node[cir,label=below:{\tiny $u_c$}] (u) at (0,0) {};
\node[cir,right of=u,label=below:{\tiny $a^y_c$}] (b) {};
\node[cir,above of=b,label=below:{\tiny $a^x_c$}] (a) {};
\node[cir,below of=b,label=below:{\tiny $a^z_c$}] (c) {};

\node[cir,right of=a,label=below:{\tiny $c^x_c$}] (g1) {};
\node[cir,label=above:{\tiny $b^x_c$}] at ($(a) + (.5,.4)$) (d1) {};
\node[cir,right of=g1,label=below:{\tiny $d^x_c$}] (j1) {};
\node[cir,right of=j1,label=below:{\tiny $t^x_c$}] (txc) {};

\node[cir,right of=b,label=below:{\tiny $c^y_c$}] (g2) {};
\node[cir,label=above:{\tiny $b^y_c$}] at ($(b) + (.5,.4)$) (d2) {};
\node[cir,right of=g2,label=below:{\tiny $d^y_c$}] (j2) {};
\node[cir,right of=j2,label=below:{\tiny $t^y_c$}] (tyc) {};

\node[cir,right of=c,label=below:{\tiny $c^z_c$}] (g3) {};
\node[cir,label=above:{\tiny $b^z_c$}] at ($(c) + (.5,.4)$) (d3) {};
\node[cir,right of=g3,label=below:{\tiny $d^z_c$}] (j3) {};
\node[cir,right of=j3,label=below:{\tiny $t^z_c$}] (tzc) {};

\draw[-] (u) -- (a)
(u) -- (b)
(u) -- (c)
(a) -- (d1)
(a) -- (g1)
(d1) -- (g1)
(g1) -- (j1)
(j1) -- (txc)
(b) -- (d2)
(b) -- (g2)
(d2) -- (g2)
(g2) -- (j2)
(j2) -- (tyc)
(c) -- (d3)
(c) -- (g3)
(d3) -- (g3)
(g3) -- (j3)
(j3) -- (tzc);

\draw ($(c) + (-.2,-.5)$) rectangle ($(a) + (.2,.2)$);
\end{tikzpicture}
\caption{The graph $G^T_c$ (the rectangle indicates that the corresponding set of vertices induces a clique).}
\end{subfigure}
\hspace*{.5cm}
\begin{subfigure}[b]{.45\textwidth}
\centering
\begin{tikzpicture}[node distance=1cm]
\node[cir,label=below:{\tiny $v_c$}] (vc) at (0,0) {};
\node[cir,right of=vc,label=below:{\tiny $w_c$}] (a) {};
\node[cir,left of=vc,label=below:{\tiny $g^y_c$}] (c) {};
\node[cir,above of=c,label=below:{\tiny $g^x_c$}] (b) {};
\node[cir,below of=c,label=below:{\tiny $g^z_c$}] (d) {};
\node[cir,left of=b,label=below:{\tiny $f^x_c$}] (fxc) {};
\node[cir,left of=c,label=below:{\tiny $f^y_c$}] (fyc) {};
\node[cir,left of=d,label=below:{\tiny $f^z_c$}] (fzc) {};

\draw[-] (vc) -- (a)
(vc) -- (b)
(vc) -- (c)
(vc) -- (d)
(b) -- (fxc)
(c) -- (fyc)
(d) -- (fzc);

\draw ($(d) + (-.2,-.5)$) rectangle ($(b) + (.2,.2)$);
\end{tikzpicture}
\caption{The graph $G^F_c$ (the rectangle indicates that the corresponding set of vertices induces a clique).}
\end{subfigure}
\caption{The clause gadget $G_c$ is the disjoint union of $G^T_c$ and $G^F_c$ for a clause $c$ containing variables $x,y$ and $z$.}
\label{fig:clausegadget}
\end{figure}
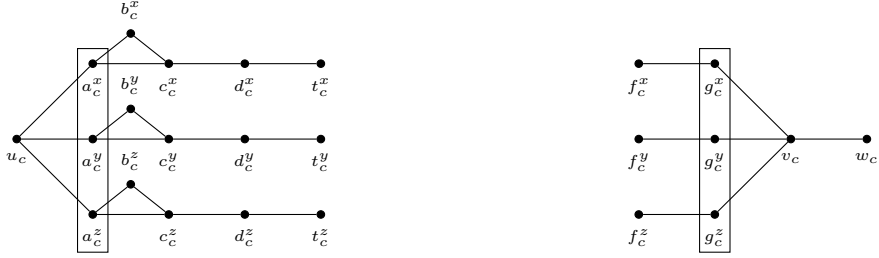

\noindent
We begin with the following easy observations.

\begin{observation}
\label{obs:clause}
Let $D$ be a total dominating set of $G_{\Phi}$. Then for each clause $c \in C$ with variables $x,y$ and $z$, the following holds.
\begin{itemize}
\item[(i)] $\vert D \cap \{g^x_c,g^y_c,g^z_c,v_c,w_c\} \vert \geq 2$ and $v_c \in D$.
\item[(ii)] For any $\ell \in \{x,y,z\}$, $D \cap \{a_c^\ell,c_c^\ell\} \neq \varnothing$ and $\vert D \cap \{a^\ell_c,b^\ell_c,c^\ell_c,d^\ell_c,t^\ell_c\} \vert \geq 2$.
\end{itemize}
In particular, $\vert D \cap V(G_c) \vert \geq 8$.
\end{observation}

(i) Indeed, since $w_c$ must be dominated, necessarily $v_c \in D$ and since $v_c$ must be dominated, $D \cap  \{g^x_c,g^y_c,g^z_c,w_c\} \neq \varnothing$.

(ii) Indeed, since $c^\ell_c$ must be dominated, either $a^\ell_c \in D$ or $c^\ell_c \in D$. If $a^\ell_c \in D$ then $D \cap \{c^\ell_c,t^\ell_c\} \neq \varnothing$ as $d^\ell_c$ should be dominated; and if $c^\ell_c \in D$ then $D \cap \{a^\ell_c,b^\ell_c,d^\ell_c\} \neq \varnothing$ as $c^\ell_c$ should be dominated. $\diamond$\\

\begin{observation}
\label{obs:variable}
Let $D$ be a total dominating set of $G_{\Phi}$. Then for each variable $x \in X$ contained in clauses $c,c'$ and $c''$, the following holds.
\begin{itemize}
\item[(i)] $\vert D \cap \{u_x,v_x,T_x,F_x\} \vert \geq 2$ and $u_x \in D$.
\item[(ii)] For any $\ell \in \{c,c',c''\}$, $D \cap \{b^\ell_x,d^\ell_x\} \neq \varnothing$ and $\vert D \cap \{a^\ell_x,b^\ell_x,c^\ell_x,d^\ell_x,t^\ell_x\} \vert \geq 2$.
\item[(iii)] for any $\ell \in \{c,c',c''\}$, $D \cap \{h^\ell_x,j^\ell_x\} \neq \varnothing$ and $\vert D \cap \{g^\ell_x,h^\ell_x,i^\ell_x,j^\ell_x,f^\ell_x\} \vert \geq 2$.
\end{itemize}
In particular, $\vert D \cap V(G_x) \vert \geq 14$.
\end{observation}

(i) Indeed, since $v_x$ must be dominated, necessarily $u_x \in D$ and since $u_x$ must be dominated, $D \cap \{v_x,T_x,F_x\} \neq \varnothing$.

(ii) Indeed, since $c^\ell_x$ must be dominated, either $b^\ell_x \in D$ or $d^\ell_x \in D$. If $b^\ell_x \in D$ then $D \cap \{a^\ell_x,c^\ell_x,d^\ell_x\} \neq \varnothing$ as $b^\ell_x$ should be dominated; and if $d^\ell_x \in D$ then $D \cap \{b^\ell_x,c^\ell_x,t^\ell_x\} \neq \varnothing$ as $d^\ell_x$ should be dominated. The proof for (iii) is symmetric.~$\diamond$\\

\noindent
We now prove the following two claims.

\begin{claim}
\label{clm:phisat}
$\gamma_t(G_{\Phi}) = 14 \vert X \vert + 8 \vert C \vert$ if and only if $\Phi$ is satisfiable.
\end{claim}

\begin{claimproof}
Assume first that $\Phi$ is satisfiable and consider a truth assignment satisfying $\Phi$. We construct a total dominating set $D$ for $G_{\Phi}$ as follows. For any variable $x \in X$ contained in clauses $c,c'$ and $c''$, if $x$ is true then add $\{u_x,T_x,d^c_x,d^{c'}_x,d^{c''}_x,t^c_x,t^{c'}_x,t^{c''}_x,h^c_x,h^{c'}_x,h^{c''}_x,j^c_x,j^{c'}_x,j^{c''}_x\}$ to $D$; otherwise add $\{u_x,F_x,j^c_x,j^{c'}_x,j^{c''}_x,f^c_x,f^{c'}_x,f^{c''}_x,b^c_x,b^{c'}_x,b^{c''}_x,d^c_x,d^{c'}_x,d^{c''}_x\}$ to $D$. For any clause $c \in C$ containing variables $x,y$ and $z$, we proceed as follows. Assume without loss of generality that $x$ is true (and thus $y$ and $z$ are false). Then add $\{c^x_c,a^x_c,d^y_c,c^y_c,d^z_c,c^z_c,g^x_c,v_c\}$ to $D$. Clearly, $D$ is a total dominating set and we conclude by Observations~\ref{obs:clause} and \ref{obs:variable} that $D$ is minimum. Thus, $\gamma_t(G_{\Phi}) = 14 \vert X \vert + 8 \vert C \vert$.\\

Conversely, assume that $\gamma_t(G_{\Phi}) = 14 \vert X \vert + 8 \vert C \vert$. Let us first make several observations. The following is a straightforward consequence of Observation~\ref{obs:clause}.

\begin{observation}
\label{obs:clause2}
Let $D$ be a total dominating set of $G_{\Phi}$. Then for any clause $c \in C$ containing variables $x,y$ and $z$, if $\vert D \cap V(G_c) \vert = 8$ then $D \cap \{f_c^x,f_c^y,f_c^z,u_c\} = \varnothing$.
\end{observation}

\begin{observation}
\label{obs:variable2}
Let $D$ be a total dominating set of $G_{\Phi}$. Then for any variable $x \in X$ contained in $c,c',$ and $c''$, if $\vert D \cap V(G_x) \vert = 14$, the following holds.
\begin{itemize}
\item[(i)] For any $\ell \in \{c,c',c''\}$, if $t_x^\ell \in D$ then $D \cap \{a_x^\ell,b_x^\ell,c_x^\ell,d_x^\ell,t_x^\ell\} = \{d_x^\ell,t_x^\ell\}$.
\item[(ii)] For any $\ell \in \{c,c',c''\}$, if $a_x^\ell \in D$ then $D \cap \{a_x^\ell,b_x^\ell,c_x^\ell,d_x^\ell,t_x^\ell\} = \{a_x^\ell,b_x^\ell\}$.
\item[(iii)] For any $\ell \in \{c,c',c''\}$, if $f_x^\ell \in D$ then $D \cap \{g_x^\ell,h_x^\ell,i_x^\ell,j_x^\ell,f_x^\ell\} = \{j_x^\ell,f_x^\ell\}$.
\item[(iv)] For any $\ell \in \{c,c',c''\}$, if $g_x^\ell \in D$ then $D \cap \{g_x^\ell,h_x^\ell,i_x^\ell,j_x^\ell,f_x^\ell\} = \{g_x^\ell,h_x^\ell\}$.
\end{itemize}
\end{observation}

(i) Indeed, note first that since $\vert D \cap V(G_x) \vert = 14$, we have by Observation \ref{obs:variable} that $\vert D \cap \{a_x^\ell,b_x^\ell,c_x^\ell,d_x^\ell,t_x^\ell\} \vert = 2$ for all $\ell \in\{c,c',c''\}$. Thus, if $t_x^\ell \in D$ then by Observation~\ref{obs:variable}(ii), $\vert D \cap \{b_x^\ell,d_x^\ell\} \vert = 1$ (note that in particular, $a_x^\ell \notin D$); but if $b_x^\ell \in D$ then $b_x^\ell$ is not dominated as $a_x^\ell \notin D$. Therefore, if $t_x^\ell \in D$ then $d_x^\ell \in D$. The proof for (iii) is symmetric.

(ii) Similarly, if $a_x^\ell \in D$ then by Observation~\ref{obs:variable}(ii), $\vert D \cap \{b_x^\ell,d_x^\ell\} \vert = 1$ (note that in particular, $t_x^\ell \notin D$); but if $d_x^\ell \in D$ then $d_x^\ell$ is not dominated as $t_x^\ell \notin D$. Therefore, if $a_x^\ell \in D$ then $b_x^\ell \in D$. The proof for (iv) if symmetric.~$\diamond$\\

\begin{observation}
\label{obs:FxTx}
Let $D$ be a total dominating set of $G_{\Phi}$. For any variable $x \in X$ contained in clauses $c,c'$ and $c''$, if $\vert D \cap V(G_x) \vert = 14$ and $\vert D \cap V(G_\ell) \vert = 8$ for all $\ell \in \{c,c',c''\}$, then the following holds.
\begin{itemize}
\item[(i)] If there exists $\ell \in \{c,c',c''\}$ such that $t_x^\ell \in D$ then $T_x \in D$.
\item[(ii)] If there exists $\ell \in \{c,c',c''\}$ such that $f_x^\ell \in D$ then $F_x \in D$. 
\end{itemize}
\end{observation}

(i) Indeed, note first that since $\vert D \cap V(G_x) \vert = 14$, we have by Observation \ref{obs:variable} that $\vert D \cap \{a_x^\ell,b_x^\ell,c_x^\ell,d_x^\ell,t_x^\ell\} \vert = 2$ for all $\ell \in\{c,c',c''\}$. Similarly, since $\vert D \cap V(G_\ell) \vert = 8$ for all $\ell \in \{c,c',c''\}$, it follows from Observation~\ref{obs:clause} that $\vert D \cap \{a_\ell^x,b_\ell^x,c_\ell^x,d_\ell^x,t_\ell^x\} \vert = 2$ for all $\ell \in \{c,c',c''\}$. Now assume that there exists $\ell \in \{c,c',c''\}$ such that $t_x^\ell \in D$, say $c$ without loss of generality, and suppose to the contrary that $T_x \notin D$. Then by Observation~\ref{obs:variable2}(i), $D \cap \{a_x^c,b_x^c,c_x^c,d_x^c,t_x^c\} = \{d_x^c,t_x^c\}$. Thus since $a_x^c$ should be dominated and $T_x \notin D$, there must exist $p \in \{c',c''\}$ such that $a_x^p \in D$, say $c'$ without loss of generality. But then by Observation~\ref{obs:variable2}(ii), $D \cap \{a_x^{c'},b_x^{c'},c_x^{c'},d_x^{c'},t_x^{c'}\} = \{a_x^{c'},b_x^{c'}\}$ and so, $t_{c'}^x \in D$ for otherwise $t_x^{c'}$ would not be dominated. But $\vert D \cap \{a_{c'}^x,b_{c'}^x,c_{c'}^x,d_{c'}^x,t_{c'}^x\} \vert = 2$ and $D \cap \{a_{c'}^x,c_{c'}^x\} \neq \varnothing$ by Observation~\ref{obs:clause}, which implies that $d_{c'}^x \notin D$ and so, $t_{c'}^x$ is not dominated, a contradiction. Thus, $T_x \in D$.

(ii) Assume that there exists $\ell \in \{c,c',c''\}$ such that $f_x^\ell \in D$, say $c$ without loss of generality, and suppose to the contrary that $F_x \notin D$. Then by Observation~\ref{obs:variable2}(iii), $D \cap \{g_x^c,h_x^c,i_x^c,j_x^c,f_x^c\} = \{j_x^c,f_x^c\}$. Thus since $g_x^c$ should be dominated and $F_x \notin D$, there must exist $p \in \{c',c''\}$ such that $g_x^p \in D$, say $c'$ without loss of generality. But then by Observation~\ref{obs:variable2}(iv), $D \cap \{g_x^{c'},h_x^{c'},i_x^{c'},j_x^{c'},f_x^{c'}\} = \{g_x^{c'},h_x^{c'}\}$ and so, $f_{c'}^x$ must belong to $D$ ($f_x^{c'}$ would otherwise not be dominated) which contradicts Observation~\ref{obs:clause2} (recall that $\vert D \cap V(G_{c'}) \vert = 8$). Thus, $F_x \in D$.~$\diamond$\\

\begin{remark} 
\label{rem:Gphi}
If $\gamma_t(G_{\Phi}) = 14 \vert X \vert + 8 \vert C \vert$ and $D$ is a minimum total dominating set of $G_{\Phi}$, then the following hold. For any clause $c \in C$ containing variable $x,y$ and $z$, we have by Observation~\ref{obs:clause} that
\begin{itemize}
\item[(i)] $\vert D \cap \{g^x_c,g^y_c,g^z_c,v_c,w_c\} \vert = 2$ and $v_c \in D$; and
\item[(ii)] for any $\ell \in \{x,y,z\}$, $D \cap \{a_c^\ell,c_c^\ell\} \neq \varnothing$ and $\vert D \cap \{a^\ell_c,b^\ell_c,c^\ell_c,d^\ell_c,t^\ell_c\} \vert = 2$.
\end{itemize}
Similarly by Observation~\ref{obs:variable}, we have that for any variable $x \in X$ contained in clauses $c,c'$ and $c''$,  
\begin{itemize}
\item[(i)] $\vert D \cap \{u_x,v_x,T_x,F_x\} \vert = 2$ and $u_x \in D$;
\item[(ii)] for any $\ell \in \{c,c',c''\}$, $D \cap \{b^\ell_x,d^\ell_x\} \neq \varnothing$ and $\vert D \cap \{a^\ell_x,b^\ell_x,c^\ell_x,d^\ell_x,t^\ell_x\} \vert = 2$; and
\item[(iii)] for any $\ell \in \{c,c',c''\}$, $D \cap \{h^\ell_x,j^\ell_x\} \neq \varnothing$ and $\vert D \cap \{g^\ell_x,h^\ell_x,i^\ell_x,j^\ell_x,f^\ell_x\} \vert = 2$.
\end{itemize}
\end{remark}

Turning back to the proof of Claim~\ref{clm:phisat}, let $D$ be a minimum total dominating set of $G_{\Phi}$. We claim the following.
\begin{observation}
\label{obs:sat}
If $\gamma_t(G_{\Phi}) = 14 \vert X \vert + 8 \vert C \vert$ then for any minimum total dominating set $D'$ and any clause $c \in C$ containing variables $x,y$ and $z$, there exists $\ell \in \{x,y,z\}$ such that $t_\ell^c \in D'$ and for any $p \in \{x,y,z\} \setminus \{\ell\}$, $f_p^c \in D'$. 
\end{observation} 

Indeed, suppose to the contrary that for all $\ell \in \{x,y,z\}$, $t_\ell^c \notin D'$. Then $\{d_c^x,d_c^y,d_c^z\} \subset D$ as $t_c^x, t_c^y$ and $t_c^z$ should be dominated, and so $D \cap \{t_c^x, t_c^y,t_c^z\} = \varnothing$ as $D \cap \{a_c^\ell,c_c^\ell\} \neq \varnothing$ and $\vert D \cap \{a^\ell_c,b^\ell_c,c^\ell_c,d^\ell_c,t^\ell_c\} \vert = 2$ for any $\ell \in \{x,y,z\}$. But then by Observation~\ref{obs:clause}, $\vert D' \cap \{a_c^p,c_c^p\} \vert = 1$ for any $p \in \{x,y,z\}$ which implies that $c_c^p \in D'$ for any $p \in \{x,y,z\}$ for otherwise at least one of $d_c^x,d_c^y$ and $d_c^z$ would not be dominated. It follows that $D' \cap \{a_c^x,a_c^y,a_c^z\} = \varnothing$ and so $u_c$ is not dominated, a contradiction. Thus, there exists $\ell \in \{x,y,z\}$ such that $t_\ell^c \in D'$, say $x$ without loss of generality. Then by Observation~\ref{obs:FxTx}(i), $T_x \in D'$ and so necessarily $F_x \notin D'$ by Remark~\ref{rem:Gphi}. But then, $f_x^c \notin D'$ for otherwise by Observation~\ref{obs:FxTx}(ii), $F_x$ would belong to $D'$, and so $g_c^x \in D'$ ($f_c^x$ would otherwise not be dominated). It then follows from Observations~\ref{obs:clause} and \ref{obs:clause2} that $D' \cap \{f_c^x,f_c^y,f_c^z,g_c^x,g_c^y,g_c^z,v_c,w_c\} = \{g_c^x,v_c\}$ which implies that for $p \in \{y,z\}$, $f_p^c \in D'$ for otherwise $f_c^p$ would not be dominated; in particular, $F_p \in D'$ for $p \in \{y,z\}$ by Observation~\ref{obs:FxTx}(ii).~$\diamond$\\

Combining Remark ~\ref{rem:Gphi} and Observations \ref{obs:FxTx} and \ref{obs:sat}, we conclude that for any variable $x \in X$, $\vert D \cap \{T_x,F_x\} \vert = 1$ and for any clause $c$ containing variables $x,y$ and $z$, there exists exactly one variable $\ell \in \{x,y,z\}$ such that $T_\ell \in D$. Therefore, we may construct a truth assignment satisfying $\Phi$ as follows: for any variable $x \in X$, if $T_x \in D$ we set $x$ to true, otherwise we set $x$ to false. This concludes the proof of Claim~\ref{clm:phisat}.
\end{claimproof}

\begin{claim}
\label{clm:GPhi}
$\gamma_t(G_{\Phi}) = 14 \vert X \vert + 8 \vert C \vert$ if and only if $G_{\Phi}$ is a \no-instance for \contractd.
\end{claim}

\begin{claimproof}
Assume first that $\gamma_t(G_{\Phi}) = 14 \vert X \vert + 8 \vert C \vert$ and let $D$ be a minimum total dominating set of $G_{\Phi}$ (note that Remark~\ref{rem:Gphi} holds). Let us show that $D$ contains no $P_3$. 

First, consider a clause $c \in C$ containing variables $x,y$ and $z$. Note that by Observation \ref{obs:clause2} and Remark \ref{rem:Gphi},  $\vert D \cap V(G_c^F) \vert = 2$ and thus $D \cap V(G_c^F)$ cannot contain any $P_3$ nor can it be part of a $P_3$. Now by Observation~\ref{obs:sat}, there exists $\ell \in \{x,y,z\}$ such that $t_\ell^c \in D$ and for $p \in \{x,y,z\} \setminus \{\ell\}$, $f_p^c \in D$. Assume without loss of generality that $\ell = x$ and denote by $c'$ and $c''$ the two other clauses in which $x$ occurs. It follows from Observation \ref{obs:FxTx} and Remark \ref{rem:Gphi} that $T_x\in D$ and $F_x\not\in D$. Then, necessarily $t_x^p \in D$ for $p \in \{c',c''\}$; indeed, since by Observation~\ref{obs:sat}, there exists a variable $t$ contained in $c'$ such that $t_t^{c'} \in D$ and $f_r^{c'} \in D$ for the other variables $r \neq t$ in $c'$, necessarily $t=x$ for otherwise we would conclude by Observation~\ref{obs:FxTx} that $F_x \in D$, a contradiction (the same reasoning applies for $c''$). It then follows from Observation \ref{obs:variable2}(i) that $D \cap (\{t_x^p,p \in \{c,c',c''\}\} \cup \{d_x^p,p \in \{c,c',c''\}\} \cup \{c_x^p,p \in \{c,c',c''\}\} \cup \{b_x^p,p \in \{c,c',c''\}\} \cup \{a_x^p,p \in \{c,c',c''\}\} \cup \{T_x,F_x,u_x,v_x\}) = \{t_x^p,p \in \{c,c',c''\}\} \cup \{d_x^p,p \in \{c,c',c''\}\} \cup \{T_x,u_x\}$. On the other hand, since by Observation~\ref{obs:clause2}, $f_p^x \notin D$ for any $p \in \{c,c',c''\}$, necessarily $j_x^p \in D$ ($f_x^p$ would otherwise not be dominated). But then, $h_x^p \in D$ for any $p \in \{c,c',c''\}$ as $j_x^p$ and $g_x^p$ should be dominated (recall that $F_x \notin D$) and so, $D \cap (\{f_x^p,p \in \{c,c',c''\}\} \cup \{j_x^p,p \in \{c,c',c''\}\} \cup \{i_x^p,p \in \{c,c',c''\}\} \cup \{h_x^p,p \in \{c,c',c''\}\} \cup \{g_x^p,p \in \{c,c',c''\}\}) = \{j_x^p,p \in \{c,c',c''\}\} \cup \{h_x^p,p \in \{c,c',c''\}\}$. Thus, $D \cap V(G_x)$ does not contain any $P_3$. Now denote by $k$ and $k'$ the two others clauses in which $y$ occurs. Then, a reasoning similar to the above shows that $f_y^p \in D$ for $p \in \{k,k'\}$ (recall that by assumption, $f_y^c \in D$) and so by Observations~\ref{obs:variable} and \ref{obs:variable2}(iii), we conclude that $D \cap (\{f_y^p,p \in \{c,k,k'\}\} \cup \{j_y^p,p \in \{c,k,k'\}\} \cup \{i_y^p,p \in \{c,k,k'\}\} \cup \{h_y^p,p \in \{c,k,k'\}\} \cup \{g_y^p,p \in \{c,k,k'\}\} \cup \{T_y,F_y,u_y,v_y\}) = \{f_y^p,p \in \{c,k,k'\}\} \cup \{j_y^p,p \in \{c,k,k'\}\} \cup \{F_y,u_y\}$. On the other hand, since $T_y \notin D$ necessarily $t_y^p \notin D$ for any $p \in \{c,k,k'\}$ (we would otherwise conclude by Observation~\ref{obs:FxTx}(i) that $T_y \in D$). We claim that then $d_y^p \in D$ for all $p \in \{c,k,k'\}$. Indeed, if $d_y^p \notin D$ for some $p \in \{c,k,k'\}$ then necessarily $t_p^y \in D$ as $t_y^p$ should be dominated. But then since $t_y^p \notin D$, it must be that $d_p^y \in D$ for otherwise $t_p^y$ would not be dominated. But $\vert D \cap \{t_p^y,d_p^y,c_p^y,b_p^y,a_p^y\} \vert = 2$ and so $D \cap \{a_p^y,c_p^y\} = \varnothing$ thereby contradicting Observation~\ref{obs:clause}(ii). Thus $d_y^c,d_y^k,d_y^{k'} \in D$ which implies that $b_y^p \in D$ for any $p \in\{c,k,k'\}$ as $d_y^p$ and $a_y^p$ should be dominated (recall that $T_y \notin D$). In particular, $D \cap V(G_y)$ does not contain any $P_3$ (the same reasoning shows that $D \cap V(G_z)$ does not contain any $P_3$ either). Now since $t_y^c \notin D$ necessarily $d_c^y \in D$ as $t_c^y$ should be dominated. We then conclude by Remark \ref{rem:Gphi} that $c_c^y \in D$; indeed, $\vert D \cap \{c_c^y,a_c^y\} \vert = 1$ and if $a_c^y \in D$ then $d_c^y$ is not dominated. Similarly, we conclude that $D \cap \{t_c^z,d_c^z,c_c^z,b_c^z,a_c^z\} = \{d_c^z,c_c^z\}$. Thus, since $u_c$ should be dominated as well as $a_c^x$ and $d_c^x$, we obtain by Observation~\ref{obs:clause2} that $D \cap \{u_c,a_c^x,b_c^x,c_c^x,d_c^x,t_c^x\} = \{a_c^x,c_c^x\}$. Thus, $D \cap V(G_c^T)$ contains no $P_3$ nor can it be part of a $P_3$ and so, $D$ contains no $P_3$.\\

Conversely, assume that $G_{\Phi}$ is a \no-instance for \contractd\ and let $D$ be a minimum total dominating set of $G_{\Phi}$. Consider a variable $x \in X$ contained in clauses $c,c'$ and $c''$. First observe that since $D \cap \{F_x,t_x,u_x,v_x\}$ does not contain a $P_3$, it follows from Observation \ref{obs:variable}(i) that $\vert D \cap \{F_x,T_x,u_x,v_x\} \vert = 2$, and we conclude by Observation~\ref{obs:variable}(i) that in fact equality holds (recall that $u_x\in D$). Similarly for any $p \in \{c,c',c''\}$, $\vert D \cap \{a_x^p,b_x^p,c_x^p,d_x^p,t_x^p\} \vert \leq 3$. Now suppose to the contrary that there exists $p \in \{c,c',c''\}$ such that $\vert D \cap \{a_x^p,b_x^p,c_x^p,d_x^p,t_x^p\} \vert = 3$. Then $\vert D \cap \{b_x^p,c_x^p,d_x^p\} \vert \leq 1$; indeed, clearly $\vert D \cap \{b_x^p,c_x^p,d_x^p\} \vert < 3$ and if $\vert D \cap \{b_x^p,c_x^p,d_x^p\} \vert = 2$ then we may assume without loss of generality that $D \cap \{b_x^p,c_x^p,d_x^p\} = \{b_x^p,d_x^p\}$. But then $D \cap \{t_x^p,a_x^p\} \neq \varnothing$ and so, $D \cap \{a_x^p,b_x^p,c_x^p,d_x^p,t_x^p\}$ contains a $P_3$, a contradiction. Thus $\vert D \cap \{b_x^p,c_x^p,d_x^p\} \vert \leq 1$ and we conclude by Observation~\ref{obs:variable}(ii) that $\vert D \cap \{b_x^p,c_x^p,d_x^p\} \vert = 1$ (in fact, either $b_x^p \in D$ or $d_x^p \in D$). It follows that $t_x^p,a_x^p \in D$ and so, necessarily $T_x \notin D$ for otherwise $a_x^p,T_x,u_x$ would induce a $P_3$ (recall that by Observation~\ref{obs:variable}(i), $u_x \in D$). But then, $(D \setminus \{a_x^p,b_x^c\}) \cup \{T_x,d_x^c\}$ is a minimum total dominating set of $G_{\Phi}$ containing a $P_3$, a contradiction. Thus, we conclude that for any $p \in \{c,c',c''\}$, $\vert D \cap \{a_x^p,b_x^p,c_x^p,d_x^p,t_x^p\} \vert \leq 2$; and by symmetry, we also conclude that $\vert D \cap \{g_x^p,h_x^p,i_x^p,j_x^p,f_x^p\} \vert \leq 2$ for any $p \in \{c,c',c''\}$. It then follows from Observation~\ref{obs:variable} that for any variable $x \in X$, $\vert D \cap V(G_x) \vert = 14$. 

Consider now a clause $c \in C$ containing variables $x,y$ and $z$. First observe that since $D \cap \{g_c^x,g_c^y,g_c^z,v_c,w_c\}$ does not contain a $P_3$, it follows from Observation \ref{obs:clause}(i) that $\vert D \cap \{g_c^x,g_c^y,g_c^z,v_c,w_c\} \vert =2$ and $v_c\in D$. Now if there exists $p \in \{x,y,z\}$ such that $f_c^p \in D$ then $g_c^p \notin D$ and so, $f_p^c \in D$ for otherwise $f_c^p$ would not be dominated. It follows that $j_p^c \notin D$ ($D$ would otherwise contain a $P_3$) and $i_p^c \notin D$ ($(D \setminus \{i_p^c\} \cup \{j_p^c\})$ would otherwise contain a $P_3$) which implies that $h_p^c \in D$ as $i_p^c$ would otherwise not be dominated. But then, $(D \setminus \{f_c^p\}) \cup \{j_p^c\}$ is a minimum total dominating set of $G_{\Phi}$ containing a $P_3$, a contradiction. Thus, $\vert D \cap V(G_c^F) \vert \leq 2$. Now for any $p \in \{x,y,z\}$, $\vert D \cap \{a_c^p,b_c^p,c_c^p,d_c^p,t_c^p\} \vert \leq 3$ as $D \cap \{a_c^p,b_c^p,c_c^p,d_c^p,t_c^p\}$ would otherwise contain a $P_3$. Suppose to the contrary that there exists $p \in \{x,y,z\}$ such that $\vert D \cap \{a_c^p,b_c^p,c_c^p,d_c^p,t_c^p\} \vert = 3$. If $\vert D \cap \{a_c^p,b_c^p,c_c^p\} \vert = 1$ then $d_c^p,t_c^p \in D$ and so necessarily $c_c^p \notin D$. Then, since $b_c^p$ should be dominated, it must be that $a_c^p \in D$. But then, $t_p^c \notin D$ ($t_p^c,t_c^p$ and $d_c^p$ would otherwise induce a $P_3$) and so $D' = (D\setminus\{d_c^p\}) \cup \{t_p^c\}$ is a minimum total dominating set of $G_{\Phi}$ with $\vert D' \cap \{t_p^c,d_p^c,c_p^c,b_p^c,a_p^c\} \vert \geq 3$, thereby contradicting the above. Thus, $\vert D \cap \{a_c^p,b_c^p,c_c^p\} \vert = 2$ (indeed, clearly $\vert D \cap \{a_c^p,b_c^p,c_c^p\} \vert < 3$) and we may assume without loss of generality that $D \cap \{a_c^p,b_c^p,c_c^p\} = \{a_c^p,c_c^p\}$. It follows that $d_c^p \notin D$ ($d_c^p,a_c^p$ and $c_c^p$ would otherwise induce a $P_3$) and so, $t_c^p \in D$. But then, it must be that $t_p^c \in D$ ($t_c^p$ would otherwise not be dominated) and so $d_p^c \notin D$ ($d_p^c,t_p^c$ and $t_c^p$ would otherwise induce a $P_3$). It follows that $D' = (D \setminus \{t_c^p\}) \cup \{d_p^c\})$ is a minimum total dominating set of $G_{\Phi}$ with $\vert D' \cap \{t_p^c,d_p^c,c_p^c,b_p^c,a_p^c\} \vert \geq 3$ thereby contradicting the above. Thus for any $p \in \{x,y,z\}$, $\vert D \cap \{a_x^p,b_c^p,c_c^p,d_c^p,t_c^p\} \vert \leq 2$ and we conclude by Observation~\ref{obs:clause}(ii) that in fact equality holds. It follows that $u_c \notin D$; indeed, if $u_c \in D$ then $\vert D \cap \{a_c^x,a_c^y,a_c^z\} \vert \leq 1$ ($D \cap \{u_c,a_c^x,a_c^y,a_c^z\}$ would otherwise contain a $P_3$) which implies that $D' = (D\setminus \{u_c\}) \cup \{a_c^p\}$ with $p \in \{x,y,z\}$ such that $a_c^p \notin D$, is a minimum total dominating set of $G_{\Phi}$ with $\vert D' \cap  \{a_c^p,b_c^p,c_c^p,d_c^p,t_c^p\} \vert \geq 3$ thereby contradicting the above. Thus, we conclude that $\vert D \cap V(G_c) \vert = 8$ for any clause $c \in C$ and so, $\gamma_t(G_{\Phi}) = \vert D \vert = 14 \vert X \vert + 8 \vert C \vert$.  
\end{claimproof}

Now by combining Claims \ref{clm:phisat} and \ref{clm:GPhi}, we obtain that $\Phi$ is satisfiable if and only if $G_{\Phi}$ is a \no-instance for \contractd\, thus concluding the proof.
\end{proof}

\begin{lemma}
\label{lemma:4sub}
Let $G$ be a graph on at least three vertices, and let $G'$ be the graph obtained by 4-subdividing every edge of $G$. Then $ct_{\gamma_t}(G) = 1$ if and only if $ct_{\gamma_t}(G') = 1$.
\end{lemma}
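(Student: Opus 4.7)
The plan is to apply Theorem~\ref{theorem:1totcontracdom}, which reduces the lemma to proving: $G$ admits a minimum TDS $D$ with $G[D]$ containing a $P_3$ if and only if $G'$ does. For each edge $e\in E(G)$ with endpoints $t(e),h(e)$, label its four subdivision vertices in $G'$ by $p_1^e,\ldots,p_4^e$, along the subdivision path from $t(e)$ to $h(e)$. Set $P_e=\{p_1^e,\ldots,p_4^e\}$. Any TDS $D'$ of $G'$ must dominate $p_2^e$ and $p_3^e$, forcing $D'$ to contain one of $\{p_1^e,p_3^e\}$ and one of $\{p_2^e,p_4^e\}$, so $|D'\cap P_e|\geq 2$.

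I first prove $\gamma_t(G')=\gamma_t(G)+2m$, where $m=|E(G)|$. For the upper bound, given a min TDS $D$ of $G$, I adjoin to $D$, for each edge $e$, a size-$2$ pattern depending on $|e\cap D|$: $\{p_1^e,p_4^e\}$ when both endpoints are in $D$; $\{p_1^e,p_2^e\}$ or $\{p_3^e,p_4^e\}$ (whichever places the subdivision vertex adjacent to the non-$D$ endpoint into the set) when exactly one endpoint is in $D$; $\{p_2^e,p_3^e\}$ otherwise. A direct verification shows this gives a TDS of $G'$ of size $|D|+2m$. For the matching lower bound and structural characterisation, I enumerate the four valid size-$2$ patterns --- $\{p_1,p_2\}$ (requires $h(e)\in D'$), $\{p_1,p_4\}$ (requires both $t(e),h(e)\in D'$), $\{p_2,p_3\}$ (no requirement), $\{p_3,p_4\}$ (requires $t(e)\in D'$) --- and combine them with the constraint that every $v\in V(G)$ must be dominated in $G'$ via a $v$-side subdivision vertex. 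Since each such pattern that contains a $v$-side vertex forces the other endpoint of $e$ into $D'$, one concludes that for any minimum TDS $D'$ of $G'$ we have $|D'\cap P_e|=2$ for every edge $e$ and $D:=D'\cap V(G)$ is a minimum TDS of $G$.

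For the forward direction, given a min TDS $D$ of $G$ with $P_3=u$-$v$-$w$ in $G[D]$, the construction assigns the pattern $\{p_1^e,p_4^e\}$ to both edges $e\in\{uv,vw\}$, inserting into $D'$ the subdivision vertex adjacent to $v$ from each; together with $v\in D'$, these form a $P_3$ in $G'[D']$. For the reverse direction I argue the contrapositive: assume that for every min TDS $D$ of $G$, $G[D]$ is a perfect matching (equivalent to containing no $P_3$, since $D$ is total dominating and $G[D]$ therefore has no isolated vertex). Let $D'$ be any minimum TDS of $G'$ and put $D:=D'\cap V(G)$; by the characterisation from the previous paragraph, $G[D]$ is a matching. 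For each matching edge $e\in G[D]$, each endpoint's only $D$-neighbour in $G$ is its matching partner, so that endpoint can be dominated in $G'$ only through the $v$-side subdivision vertex on $e$ (the pattern case analysis rules out any alternative from non-matching incident edges). This forces the pattern of $e$ to be exactly $\{p_1^e,p_4^e\}$, contributing only the disjoint edges $t(e)$-$p_1^e$ and $p_4^e$-$h(e)$ to $G'[D']$; non-matching edges contribute a single isolated edge each. Hence $G'[D']$ is a disjoint union of edges, with no $P_3$.

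The main obstacle is the rigidity of the pattern analysis --- rigorously proving that in any minimum TDS $D'$ of $G'$ we have $|D'\cap P_e|=2$ for every edge $e$ and that $D'\cap V(G)$ is itself a minimum TDS of $G$, and then in the reverse direction pinning down the pattern of each matching edge to $\{p_1^e,p_4^e\}$ exactly. The hypothesis that $G$ has at least three vertices is essential, as illustrated by the case $G=K_2$, $G'=P_6$: here $P_6$ admits the minimum TDS $\{p_1,p_2,p_3,p_4\}$ with $D'\cap V(G)=\varnothing$, inducing a $P_4$ (and hence containing a $P_3$), while the only minimum TDS of $K_2$ consists of its two vertices and induces a single edge.
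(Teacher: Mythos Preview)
Your forward direction is fine, but the structural claim you rely on for the reverse direction is false. You assert that for \emph{every} minimum total dominating set $D'$ of $G'$ one has $|D'\cap P_e|=2$ for all $e$ and that $D:=D'\cap V(G)$ is a minimum total dominating set of $G$. Take $G=P_3$ with vertices $a,b,c$; then $G'=P_{11}$, say $a,p_1,p_2,p_3,p_4,b,q_1,q_2,q_3,q_4,c$, and $\gamma_t(G')=6=\gamma_t(G)+2|E(G)|$. The set $D'=\{p_1,p_2,b,q_1,q_3,q_4\}$ is a minimum total dominating set of $G'$, yet $|D'\cap P_{bc}|=3$ and $D'\cap V(G)=\{b\}$ is not even a total dominating set of $G$. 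Your pattern analysis only applies \emph{after} you know $|D'\cap P_e|=2$, so using it to deduce $|D'\cap P_e|=2$ is circular; and without that, the contrapositive argument (``$D:=D'\cap V(G)$ is a minimum TDS of $G$, hence $G[D]$ is a matching, hence \ldots'') collapses at the first step.

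The paper avoids this trap in two ways. First, it proves $\gamma_t(H)=\gamma_t(G)+2$ for a \emph{single} 4-subdivision by explicitly converting a minimum TDS of $H$ into one of $G$ of size at most $|D'|-2$, possibly \emph{inserting} one or both endpoints $u,v$ into the set (rather than just intersecting with $V(G)$); this is then iterated over all edges. Second, for the reverse implication it does not assume any rigid structure of $D'$: it takes the constructed $D$ in $G$ and performs a case analysis on where the $P_3$ sits inside $D'$ (on $e_4,v,f_1$; on $u,e_1,e_2$; on $e_1,e_2,e_3$), in several subcases replacing $D'$ by another minimum TDS of $G'$ before reading off a $P_3$ in $D$. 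That flexibility is exactly what your rigid ``$D'\cap V(G)$'' approach lacks.
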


\begin{proof}
Let $G=(V,E)$ be a graph with $\vert V \vert \geq 3$. In the following, given an edge $e=uv$ of $G$, we denote by $e_1$, $e_2$ $e_3$ and $e_4$ the four new vertices resulting from the 4-subdivision of the edge $e$ (where $e_1$ is adjacent to $u$ and $e_4$ is adjacent to $v$). We first prove the following.

\begin{claim}
\label{clm:4sub}
If $H$ is the graph obtained from $G$ by 4-subdividing one edge, then $\gamma_t(H) = \gamma_t(G) + 2$.
\end{claim}

\begin{claimproof}
Assume that $H$ is obtained by 4-subdividing the edge $e =uv$ and consider a minimum total dominating set $D$ of $G$. We construct a total dominating set of $H$ as follows (see Fig. \ref{fig:totdomH}). If $D \cap \{u,v\} = \varnothing$, then $D \cup \{e_2,e_3\}$ is a total dominating set of $H$. If $\vert D \cap \{u,v\} \vert = 1$, say $u \in D$ without loss of generality, then $D \cup \{e_3,e_4\}$ is a total dominating set of $H$. Finally, if $\{u,v\} \subset D$ then $D \cup \{e_1,e_4\}$ is a total dominating set of $H$. We thus conclude that $\gamma_t (H) \leq \gamma_t (G) + 2$.

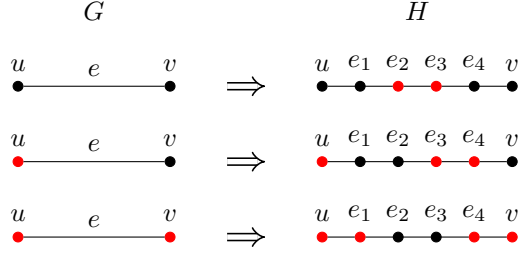
\begin{figure}[htb]
\centering
\begin{tikzpicture}[node distance=.5cm]
\node[circ,label=above:{\small $u$}] (u1) at (0,3) {};
\node[circ,label=above:{\small $v$}] (v1) at (2,3) {};
\draw[-] (u1) -- (v1) node[midway,above] {\small $e$};

\draw[-Implies,line width=.6pt,double distance=2pt] (2.75,3) -- (3.25,3); 

\node[circ,label=above:{\small $u$}] (ub1) at (4,3) {};
\node[circ,label=above:{\small $e_1$},right of=ub1] (e11) {};
\node[circ,red,label=above:{\small $e_2$},right of=e11] (e21) {};
\node[circ,red,label=above:{\small $e_3$},right of=e21] (e31) {};
\node[circ,label=above:{\small $e_4$},right of=e31] (e41) {};
\node[circ,label=above:{\small $v$},right of=e41] (vb1) {};
\draw[-] (ub1) -- (e21)
(e21) -- (e31)
(e31) -- (vb1);

\node[circ,red,label=above:{\small $u$}] (u2) at (0,2) {};
\node[circ,label=above:{\small $v$}] (v2) at (2,2) {};
\draw[-] (u2) -- (v2) node[midway,above] {\small $e$};

\draw[-Implies,line width=.6pt,double distance=2pt] (2.75,2) -- (3.25,2); 

\node[circ,red,label=above:{\small $u$}] (ub2) at (4,2) {};
\node[circ,label=above:{\small $e_1$},right of=ub2] (e12) {};
\node[circ,label=above:{\small $e_2$},right of=e12] (e22) {};
\node[circ,red,label=above:{\small $e_3$},right of=e22] (e32) {};
\node[circ,red,label=above:{\small $e_4$},right of=e32] (e42) {};
\node[circ,label=above:{\small $v$},right of=e42] (vb2) {};
\draw[-] (ub2) -- (e32)
(e32) -- (e42)
(e42) -- (vb2);

\node[circ,red,label=above:{\small $u$}] (u3) at (0,1) {};
\node[circ,red,label=above:{\small $v$}] (v3) at (2,1) {};
\draw[-] (u3) -- (v3) node[midway,above] {\small $e$};

\draw[-Implies,line width=.6pt,double distance=2pt] (2.75,1) -- (3.25,1); 

\node[circ,red,label=above:{\small $u$}] (ub3) at (4,1) {};
\node[circ,red,label=above:{\small $e_1$},right of=ub3] (e13) {};
\node[circ,label=above:{\small $e_2$},right of=e13] (e23) {};
\node[circ,label=above:{\small $e_3$},right of=e23] (e33) {};
\node[circ,red,label=above:{\small $e_4$},right of=e33] (e43) {};
\node[circ,red,label=above:{\small $v$},right of=e43] (vb3) {};
\draw[-] (ub3) -- (e13)
(e13) -- (e43)
(e43) -- (vb3);

\node[draw=none] at (1,4) {\small $G$};
\node[draw=none] at (5.25,4) {\small $H$};
\end{tikzpicture}
\caption{Constructing a total dominating set of $H$ from a total dominating set of $G$ (vertices in red belong to the corresponding total dominating set).}
\label{fig:totdomH}
\end{figure}

Conversely, let $D$ be a minimum total dominating set of $H$. First note that if $e_1\in D$ and $u \not\in D$, necessarily $e_2 \in D$ for otherwise $e_1$ would not be dominated. Similarly, if $e_4 \in D$ and $v \not\in D$ then $e_3 \in D$. Thus, if $e_1,e_4\in D$ then $(D \setminus \{e_1,e_2,e_3,e_4\}) \cup \{u,v\}$ is a total dominating set of $G$ of size at most $\gamma_t (H) - 2$. Now suppose that $e_4 \not\in D$. Then, necessarily $e_2 \in D$ for otherwise $e_3$ would not be dominated, and if $v \notin D$ then $e_3 \in D$ for otherwise $e_4$ would not be dominated. Thus, if $e_1 \in D$ and $e_4 \not\in D$, either $v \in D$ in which case $D \setminus \{e_1,e_2,e_3,e_4\}$ is a total dominating set of $G$ of size at most $\gamma_t(H) - 2$; or $v\not\in D$ and $(D \setminus \{e_1,e_2,e_3,e_4\}) \cup \{v\}$ is a total dominating set of size at most $\gamma(H) - 2$. By symmetry, we conclude similarly if $e_4 \in D$ and $e_1 \not\in D$. Now if both $e_1$ and $e_4$ do not belong to $D$ then $e_2, e_3 \in D$ and so, $D \setminus \{e_1,e_2,e_3,e_4\}$ is a total dominating set of $G$ of size at most $\gamma_t (H) - 2$. Therefore, $\gamma_t(G) \leq \gamma_t(H) - 2$ which concludes the proof of the claim.
\end{claimproof}

\begin{remark}
Note that the minimum total dominating set $D$ of $G$ constructed from a minimum total dominating $D'$ of $H$ according to the proof of Claim \ref{clm:4sub} has the following property: if $e_1 \in D'$ (resp. $e_4 \in D'$) then $v \in D$ (resp. $u \in D$).
\end{remark}

We now prove the statement of the lemma. Let $G'$ be the graph obtained by 4-subdividing every edge of $G$. Then, $\gamma_t(G') = \gamma_t(G) + 2 \vert E \vert$ by Claim \ref{clm:4sub}.

First assume that $ct_{\gamma_t}(G) = 1$. Then by Theorem \ref{theorem:1totcontracdom}, there exists a minimum total dominating set $D$ of $G$ containing a $P_3$, say $u,v,w$. Let $D'$ be the minimum total dominating set of $G'$ constructed from $D$ according to the proof of Claim \ref{clm:4sub}. Then $D'$ contains a $P_3$, namely $e_4,v,f_1$ where $e=uv$, $f=vw$ and $f_1$ is the vertex resulting from the 4-subdivision of $f$ adjacent to~$v$.

Conversely, assume that $ct_{\gamma_t}(G') = 1$. Then by Theorem \ref{theorem:1totcontracdom}, there exists a minimum total dominating set $D'$ of $G'$ containing a $P_3$ which we denote by $P$ in the following. Now let $D$ be the minimum total dominating set of $G$ constructed from $D'$ according to the proof of Claim \ref{clm:4sub}. If $P$ is made up of the vertices $e_4,v,f_1$, where $e = uv$, $f =vw$ and $f_1$ is the vertex resulting from the 4-subdivision of $f$ adjacent to $v$, then $u,v,w \in D$ by construction (see Remark 2). If $P$ is made up of the vertices $u, e_1,e_2$, where $e=uv$, then we may assume that $u$ has no other neighbor in $D$ than $e_1$ (we would otherwise fall back into the previous case). Suppose first that $v \in D'$. Then, $e_4 \not\in D'$ for otherwise $D' \setminus \{e_2\}$ would be a total dominating set of $G'$ of size strictly less than that of $D'$, a contradiction. It follows that $v$ has a neighbor $f_1$ belonging to $D$, with $f=vw$ ($v$ would otherwise not be dominated); but then $w \in D$ by construction (see Remark 2) and so, $D$ contains $u,v,w$. Thus, suppose that $v \not\in D'$. Then $e_4 \not\in D'$; indeed, if $e_4 \in D'$ then $e_3 \in D'$ ($e_4$ would otherwise not be dominated) but then, $D' \setminus \{e_2\}$ is a total dominating set of $G'$ of size strictly less than that of $D'$, a contradiction. It follows that $v$ has a neighbor $f_1$ belonging to $D'$, with $f= vw$ ($v$ would otherwise not be dominated). But then, $v,w \in D$ by construction (see Remark 2) and so, $D$ contains $u,v,w$. Suppose finally that $P$ is made up of the vertices $e_1,e_2,e_3$ with $e=uv$ and assume that $u \not\in D'$ (we would otherwise fall back into the previous case). Then $v\not\in D'$ for otherwise $D' \setminus \{e_3\}$ would be a total dominating set of $G'$ of size stricly less than that of $D'$, a contradiction. Suppose first that $e_4 \in D'$. If $v$ has another neighbor in $D'$, say $f_1 \in D'$ with $f= vw$, then by construction $D$ contains $u,v,w$ (see Remark 2). Thus, we may assume that $v$ has no other neighbor in $D'$ than $e_4$. Now since $\vert V \vert \geq 3$ and $G$ is connected, one of $u$ and $v$ has a neighbor in $V \setminus \{u,v\}$, say $f =vw \in E$ without loss of generality. Note that we may assume that $w \not\in D'$ for otherwise $D$ would contain $u,v,w$. Now since $f_1 \not\in D'$ by assumption, necessarily $f_2 \in D'$ ($f_1$ would otherwise not be dominated) and $f_3 \in D'$ ($f_2$ would otherwise not be dominated) and so, by considering $D'' = (D' \setminus \{e_3,e_4\}) \cup \{v,f_1\}$, we fall back into the previous case (indeed, $D''$ contains $v,f_1,f_2$). Second, suppose that $e_4 \not\in D$. 
%Then $v \not\in D$ for otherwise $D' \setminus \{e_3\}$ would be a total dominating set of $G'$ of size strictly less than that of $D$, a contradiction. 
Clearly, $v$ has a neighbor $f_1 \in D$, with $f=vw$ ($v$ would otherwise not be dominated), and $f_2 \in D$ ($f_1$ would otherwise not be dominated). But then, by considering $D'' = (D' \setminus \{e_3\}) \cup \{v\}$, we fall back into the previous case (indeed, $D''$ contains $v,f_1,f_2$). Thus, $G$ has a minimum total dominating set containing a $P_3$ and we conclude by Theorem \ref{theorem:1totcontracdom} that $ct_{\gamma_t}(G) = 1$. 
\end{proof}

By applying a 4-subdivision to an instance of \contractd{} sufficiently many times, we deduce the following from Lemma \ref{lemma:4sub}.

\begin{theorem}
\label{thm:cycles}
For any $l \geq 3$, \contractd{} is $\mathsf{NP}$-hard when restricted to $\{C_3,\ldots,C_l\}$-free graphs.
\end{theorem}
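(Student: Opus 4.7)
The plan is to chain Lemma~\ref{lemma:4sub} with itself a constant number of times, starting from a graph class on which \contractd{} is already known to be $\mathsf{NP}$-hard. A natural source is Theorem~\ref{thm:P6P5P2}, which gives $\mathsf{NP}$-hardness on $\{P_6,P_5+P_2\}$-free graphs; in particular \contractd{} is $\mathsf{NP}$-hard on general graphs. Given such an instance $G$ with $|V(G)|\geq 3$, I would define $G^{(0)}=G$ and inductively let $G^{(i+1)}$ be the graph obtained by 4-subdividing every edge of $G^{(i)}$. Iterating Lemma~\ref{lemma:4sub} (each intermediate graph still has at least three vertices) yields $ct_{\gamma_t}(G)=1 \iff ct_{\gamma_t}(G^{(k)})=1$ for every $k\geq 0$, so the output of the reduction is $G^{(k)}$ for a well-chosen $k$.

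The heart of the argument is the girth computation. A single 4-subdivision replaces each edge by a path of length $5$, hence turns every cycle of length $c$ into a cycle of length $5c$. Thus the girth of $G^{(i)}$ is at least $5^i$ times the girth of $G$ (and $G^{(i)}$ remains a forest if $G$ is one). Choosing
\[
k=\left\lceil \log_5 (l/3) \right\rceil +1,
\]
the graph $G^{(k)}$ contains no cycle of length $\leq l$, and therefore belongs to the class of $\{C_3,\ldots,C_l\}$-free graphs, as required.

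It remains to verify the reduction is polynomial. Since $l$ is a fixed constant, so is $k$. Each 4-subdivision replaces each edge with a path of $5$ edges and $4$ new vertices, so $|V(G^{(i+1)})|=|V(G^{(i)})|+4|E(G^{(i)})|$ and $|E(G^{(i+1)})|=5|E(G^{(i)})|$; after $k$ iterations the sizes are $O(5^k|E(G)|)$, polynomial in $|V(G)|$. Combining with $\mathsf{NP}$-hardness on the starting class, we obtain $\mathsf{NP}$-hardness of \contractd{} on $\{C_3,\ldots,C_l\}$-free graphs.

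There is essentially no hard step here: Lemma~\ref{lemma:4sub} already encapsulates the delicate part (preserving $ct_{\gamma_t}=1$ under 4-subdivision), so the only thing to check is that the girth grows geometrically under repeated 4-subdivision and that the size blow-up stays polynomial for constant $l$, both of which are immediate from the construction.
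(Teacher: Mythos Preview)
Your proposal is correct and matches the paper's own argument exactly: the paper simply states that the result follows ``by applying a 4-subdivision to an instance of \contractd{} sufficiently many times'' via Lemma~\ref{lemma:4sub}, and you have filled in precisely the details (girth multiplies by $5$ at each step, and the blow-up is polynomial since $l$ is fixed) that make this one-line deduction rigorous.
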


%============================================================================

\section{Algorithms}
\label{sec-easy}

In this section, we deal with the cases in which \contractd{} is tractable. A first simple approach to this problem, from which we obtain Proposition~\ref{prop:boundedtdom}, is based on brute force.

\begin{proposition}
\label{prop:boundedtdom}
\contractd{} can be solved in polynomial-time solvable on a graph class $\mathcal{C}$, if one of the following holds:
\begin{itemize}
\item[(a)] $\mathcal{C}$ is closed under edge contraction and \tdom\ is solvable in polynomial time on $\mathcal{C}$; or
\item[(b)] for every $G \in \mathcal{C}$, $\gamma_t(G) \leq q$ where $q$ is a fixed constant; or
\item[(c)] $\mathcal{C}$ is the class of $(H+K_1)$-free graphs where $\vert V(H)\vert=q$ is a fixed constant and \contractd{} is polynomial-time solvable on $H$-free graphs.
\end{itemize}
\end{proposition}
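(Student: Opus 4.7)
The plan is to handle each of (a), (b), (c) separately, all based on Theorem~\ref{theorem:1totcontracdom}, which reduces \contractd{} to checking whether some minimum total dominating set of $G$ induces a subgraph containing $P_3$.

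For (a), I would simply iterate over the edges: compute $\gamma_t(G)$ using the assumed polynomial-time algorithm for \tdom\ on $\mathcal{C}$ (by binary search on $k$ or by the decision oracle), then for each edge $e \in E(G)$ contract $e$ to obtain $G/e$, which still lies in $\mathcal{C}$ by the closure assumption, and compute $\gamma_t(G/e)$; accept if and only if $\gamma_t(G/e) \le \gamma_t(G)-1$ for some $e$. This runs in polynomial time.

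For (b), since $\gamma_t(G) \le q$ is a constant, I would enumerate all subsets $D \subseteq V(G)$ with $\vert D\vert \le q$ in time $\Oh(n^q)$. For each such $D$, checking in polynomial time whether $D$ is a total dominating set yields $\gamma_t(G)$ as the minimum size over successful $D$. Then, again enumerating all subsets of size exactly $\gamma_t(G)$ and checking whether any is a total dominating set whose induced subgraph contains a $P_3$, we decide the problem via Theorem~\ref{theorem:1totcontracdom}.

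For (c), the key observation is that if $G$ is $(H+K_1)$-free and contains an induced copy of $H$ on vertex set $S$, then no vertex of $V(G) \setminus S$ can be anticomplete to $S$ (else $S \cup \{v\}$ would induce $H+K_1$), so $S$ dominates $V(G)$. In $\Oh(n^q)$ time we can check whether such an $S$ exists. If it does not, then $G$ is $H$-free and we apply the hypothesised polynomial-time algorithm directly. Otherwise, $\gamma(G) \le q$, and since $G$ is connected on at least two vertices, a standard construction (for each $v$ in a minimum dominating set with no neighbour already selected, add one neighbour of $v$) gives $\gamma_t(G) \le 2\gamma(G) \le 2q$; we then apply the algorithm from (b) to this specific $G$.

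The main (mild) obstacle is really just item (c): one must notice the double role played by an induced copy of $H$ as a dominating set, and verify that bounded domination number transfers to bounded total domination number on connected graphs — once this is in hand, the reduction to (b) is immediate and the rest of the proposition reduces to brute-force enumeration combined with Theorem~\ref{theorem:1totcontracdom}.
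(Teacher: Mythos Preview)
Your proposal is correct and follows essentially the same approach as the paper: for (a) you iterate over edges and recompute $\gamma_t$, for (b) you brute-force enumerate candidate sets of size at most $q$ and invoke Theorem~\ref{theorem:1totcontracdom}, and for (c) you observe that an induced copy of $H$ must dominate $G$, bound $\gamma_t(G)\le 2q$, and reduce to (b). The only cosmetic difference is that the paper explicitly disposes of the dominating-edge case in (b) before enumerating, but this is subsumed by your argument anyway.
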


\begin{proof}
In order to prove (a), it suffices to note that if we can compute $\gamma_t(G)$ and $\gamma_t(G/ e)$ for any edge $e$ of $G$ in polynomial time, then we can determine in polynomial time whether $G$ is a \yes-instance for \contractd.

For (b), we proceed as follows. Given a graph $G$ of $\mathcal{C}$, we first check whether $G$ has a dominating edge. If it is the case, then $G$ is a \no-instance for \contractd{}. Otherwise, we may consider any subset $S \subseteq V(G)$ with $\vert S \vert \leq q$ and check whether it is a total dominating set of $G$. Since there are at most $\Oh(n^q)$ possible such subsets, we can determine the total domination number of $G$ and check whether the conditions given in Theorem~\ref{theorem:1totcontracdom} are satisfied in polynomial time.

So as to prove (c), we provide the following algorithm. Let $H$ and $q$ and let $G$ be an instance of \contractd{} on $(H+K_1)$-free graphs. We first test whether $G$ is $H$-free (note that this can be done in time $\Oh(n^q)$). If this is the case, we use the polynomial-time algorithm for \contractd{} on $H$-free graphs. Otherwise, there is a set $S\subseteq V(G)$ such that $G[S]$ is isomorphic to $H$; but since $G$ is a $(H+K_1)$-free graph, $S$ must then be a dominating set of $G$ and so, $\gamma_t(G)\leq 2q$. We then conclude by Proposition~\ref{prop:boundedtdom}(b) that \contractd{} is also polynomial-time solvable in this case.
\end{proof}

\begin{theorem}
\label{theorem:p5free}
\contractd{} is polynomial-time solvable on $P_5$-free graphs.
\end{theorem}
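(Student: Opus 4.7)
The plan is to apply Proposition~\ref{prop:boundedtdom}(a), which reduces the theorem to verifying two facts about the class of $P_5$-free graphs: (i) it is closed under edge contraction, and (ii) \tdom{} is polynomial-time solvable on it.

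Fact (ii) is essentially off-the-shelf. \dom{} is known to be polynomial-time solvable on $P_5$-free graphs, and by the result of \cite{semidom} cited in the introduction, the complexities of \dom{} and \tdom{} coincide on every $H$-free class; taking $H = P_5$ gives (ii).

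The substantive part is (i), which I plan to prove by contrapositive. Suppose $G$ is $P_5$-free, let $e = uv \in E(G)$, and denote by $w$ the vertex of $G/e$ resulting from the contraction. Assume toward contradiction that $G/e$ contains an induced path $v_1 v_2 v_3 v_4 v_5$. If $w \notin \{v_1, \ldots, v_5\}$, the same five vertices already induce a $P_5$ in $G$, a contradiction. Otherwise $w = v_i$ for some $i$, and I would do a case analysis (using the reflection symmetry $i \leftrightarrow 6-i$) on the three essentially different positions of $w$ along the path, each further subdivided according to whether the neighbors of $w$ in the $P_5$ are adjacent to $u$, to $v$, or to both in $G$. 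The uniform fact driving all the checks is that, for any $v_j \neq w$, the non-adjacency $v_j \not\sim w$ in $G/e$ forces $v_j$ to be non-adjacent to both $u$ and $v$ in $G$; this yields all the non-edges needed to verify that, in every subcase, an appropriately chosen set of five vertices drawn from $\{v_1, \ldots, v_5\}\setminus\{w\}$ together with one or both of $u, v$ induces a $P_5$ in $G$, contradicting $P_5$-freeness.

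The main obstacle is the case analysis in (i), specifically when $w$ is an internal vertex of the $P_5$ and its two neighbors in the path are attached to opposite endpoints of $e$ in $G$: in that subcase one must insert \emph{both} $u$ and $v$ into the emerging path and then truncate, rather than simply replacing $w$ by one of $u, v$. All remaining subcases reduce to direct, analogous checks. Once (i) is established, Proposition~\ref{prop:boundedtdom}(a) delivers the theorem immediately.
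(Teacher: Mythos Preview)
Your plan has a fatal gap in step (ii): \tdom\ is \emph{not} polynomial-time solvable on $P_5$-free graphs unless $\mathsf{P}=\mathsf{NP}$. Split graphs are $P_5$-free (any induced $P_5$ contains an induced $2K_2$ on its first and last edges, which split graphs forbid), and \dom\ is classically $\mathsf{NP}$-complete on split graphs. By the very result of \cite{semidom} you invoke, the complexities of \dom\ and \tdom\ agree on $P_5$-free graphs, so \tdom\ is $\mathsf{NP}$-hard there as well. Hence Proposition~\ref{prop:boundedtdom}(a) is inapplicable, and your argument does not go through. (Your verification of (i) is fine; $P_k$-freeness is indeed preserved under edge contraction by exactly the case analysis you sketch.)

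The paper's proof proceeds along a completely different, structural route that avoids computing $\gamma_t$ altogether. It shows that a connected $P_5$-free graph $G$ is a \yes-instance for \contractd\ if and only if $\gamma_t(G)\geq 3$, i.e.\ if and only if $G$ has no dominating edge. One direction is trivial. For the other, starting from any minimum total dominating set $D$ with $|D|\geq 3$, the paper picks $u,v\in D$ at maximum distance and, through a case analysis on $d_G(u,v)$ and on shortest paths between neighbours of $u$ and $v$ in $D$ (repeatedly exploiting $P_5$-freeness), exhibits a minimum total dominating set containing an induced $P_3$; Theorem~\ref{theorem:1totcontracdom} then finishes. The resulting algorithm is simply ``test whether some edge dominates $V(G)$; output \no\ if so and \yes\ otherwise'', which is why the $\mathsf{NP}$-hardness of \tdom\ on this class is no obstacle.
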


\begin{proof}
Let $G$ be a $P_5$-free graph. If $\gamma_t(G) = 2$, then $G$ is clearly a \no-instance for \contractd{}. Now, assume that $\gamma_t(G) \geq 3$ and consider a minimum total dominating set $D$ of $G$. Let us now show that then $G$ is a \yes-instance for \contractd{}. If $\gamma_t(G) = 3$ then it is clear that $D$ contains a $P_3$ as every vertex in $S$ has a neighbor in $D$; thus, by Theorem~\ref{theorem:1totcontracdom}, $G$ is a \yes-instance for \contractd{}. Next, suppose that $\gamma_t (G) \geq 4$ and consider $u,v \in D$ such that $d_G(u,v) = \max_{x,y \in D} d_G(x,y)$. If $d_G(u,v) =1$, then $G[D]$ is a clique and $G$ is therefore a \yes-instance for \contractd{} by Theorem \ref{theorem:1totcontracdom}. Thus, we may assume that $d_G(u,v) \geq 2$. Furthermore, we may assume that $u$ and $v$ have no common neighbor in $D$ for otherwise we are done by Theorem \ref{theorem:1totcontracdom}. Denote by $x$ (resp. $y$) a neighbor of $u$ (resp. $v$) in $D$ and assume that $x$ and $y$ are not adjacent (if $xy \in E(G)$ then we are done by Theorem~\ref{theorem:1totcontracdom}). Since $G$ is $P_5$-free, $d_G(x,y) \leq 3$. 

Suppose first that $d_G(x,y) = 3$ and let $a$ (resp. $b$) be the neighbor of $x$ (resp. $y$) on a shortest path from $x$ to $y$. Then $u$ is adjacent to either $a$ or $b$ but not both; indeed, $u$ is adjacent to either $a$ or $b$ as $u,x,a,b,y$ would otherwise induce a $P_5$. By symmetry, the same holds for $v$. But if $u$ is adjacent to both, since $v$ is adjacent to either $a$ or $b$, we would have $d_G(u,v) = 2 < d_G(x,y)$ thereby contradicting the choice of $u$ and $v$. A similar reasoning shows that if $u$ is adjacent to $a$ (resp. $b$) then $v$ is adjacent to $b$ (resp. $a$). Assume without loss of generality that $u$ is adjacent to $a$ (and thus, $v$ is adjacent to $b$). Then, $N_G(u) \cup N_G(v) \subseteq N_G(a) \cup N_G(b)$; indeed, if $t$ is a neighbor of $u$ then $t$ is nonadjacent to $v$ (recall that $d_G(u,v) \geq d_G(x,y) = 3$) and thus, $t$ is adjacent to either $a$ or $b$ for otherwise $t,u,a,b,v$ would induce a $P_5$. We conclude similarly if $t$ is a neighbor of $v$. But then, $D'=(D \backslash \{u,v\}) \cup \{a,b\}$ is a minimum total dominating set of $G$ containing a $P_3$; indeed, $D'$ is clearly dominating and if a vertex $w \in D$ were dominated by either $u$ or $v$, then $w$ is dominated $a$ or $b$ in $D'$. We then conclude by Theorem \ref{theorem:1totcontracdom} that $G$ is a \yes-instance for \contractd{}. 

Now, suppose that $d_G(x,y) = 2$ and denote by $a$ the vertex on a shortest path from $x$ to $y$. Then, $a$ is adjacent to either $u$ or $v$ for otherwise $u,x,a,y,v$ induce a $P_5$. Suppose first that $a$ is adjacent to both $u$ and $v$. We may assume that both $x$ and $y$ have at least one private neighbor with respect to $D$; if it weren't the case for $x$, then $(D \backslash \{x\}) \cup \{a\}$ would be a minimum total dominating set of $G$ containing a $P_3$ (the same argument holds for $y$). Let $t$ (resp. $s$) be a private neighbor of $x$ (resp. $y$). Clearly, $t$ and $s$ must be nonadjacent since otherwise, $x,t,s,y,v$ induce a $P_5$. Also, at least $t$ or $s$ is adjacent to $a$, otherwise $t,x,a,y,s$ induce a $P_5$. Without loss of generality, we may assume that $s$ is adjacent to $a$. If $t$ is nonadjacent to $a$ then every private neighbor $r$ of $y$ must be adjacent to $a$ for otherwise $t,x,a,y,r$ would induce a $P_5$; thus, $(D \backslash \{y\}) \cup \{a\}$ is a minimum total dominating set of $G$ containing a $P_3$ and so by Theorem~\ref{theorem:1totcontracdom}, we have that $ct_{\gamma_t} (G) = 1$. Thus, we may assume now that $t$ is also adjacent to $a$, and hence every private neighbor of $x$ and $y$ is adjacent to $a$ (if there exists a private neighbor of $x$ or $y$ which is nonadjacent to $a$, we conclude as previously), and therefore $(D \backslash \{y\}) \cup \{a\}$ is a minimum total dominating set of $G$ containing a $P_3$. The result then follows from Theorem~\ref{theorem:1totcontracdom}. Now, if $a$ is nonadjacent to one of $u$ and $v$, say $v$ without loss of generality, then any neighbor $t$ of $u$ is adjacent to either $a$, $y$ or $v$ for otherwise $tuayv$ would induce a $P_5$; but then, $(D\backslash \{u\}) \cup \{a\}$ is a minimum total dominating set of $G$ containing a $P_3$ and thus by Theorem~\ref{theorem:1totcontracdom}, $ct_{\gamma_t}(G) = 1$.
\end{proof}

\begin{theorem}
\label{thm:P4kP3}
For any fixed $k \geq 0$, \contractd\ is polynomial-time solvable on $(P_4+kP_3)$-free graphs.
\end{theorem}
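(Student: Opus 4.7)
The proof proceeds by induction on $k$. For the base case $k=0$, every $P_4$-free graph is $P_5$-free (since $P_5$ contains $P_4$ as an induced subgraph), so the result is immediate from Theorem~\ref{theorem:p5free}.

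For the inductive step, assume the theorem holds for $(P_4+(k-1)P_3)$-free graphs and let $G$ be a connected $(P_4+kP_3)$-free graph. In time $O(n^{3k+1})$ we test whether $G$ contains an induced $P_4+(k-1)P_3$. If it does not, then $G$ is $(P_4+(k-1)P_3)$-free and the inductive hypothesis applies. Otherwise, fix $S\subseteq V(G)$ with $G[S]\cong P_4+(k-1)P_3$, so that $|S|=3k+1$ is a constant. The key structural observation is that $R := G - N[S]$ is $P_3$-free: any induced $P_3$ of $R$ would combine with $G[S]$ to form an induced $P_4+kP_3$, a contradiction. Hence $R$ is a disjoint union of cliques $K_1,\dots,K_m$, and writing $N := N(S)\setminus S$ we obtain the decomposition $V(G) = S \cup N \cup R$.

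By Theorem~\ref{theorem:1totcontracdom}, it suffices to decide whether $G$ admits a minimum total dominating set $D$ such that $G[D]$ contains an induced $P_3$. The plan is to branch on $D\cap S$: since $|S|=O(1)$, there are only $2^{3k+1}$ candidates. For each candidate $D_S\subseteq S$, we look for a minimum-size extension $D_N\cup D_R\subseteq N\cup R$ turning $D_S$ into a total dominating set, and check whether any such minimum $D$ satisfies the $P_3$-condition. The sub-problem has a very constrained structure: each clique $K_i$ of $R$ contributes to $D$ in one of a bounded number of ways (namely $|D\cap K_i|\in\{0,1\}$ with $K_i$ externally dominated from $N$, or $|D\cap K_i|\geq 2$ meaning $K_i$ is dominated from within at cost $2$), while $N$ decomposes into at most $2^{|S|}$ neighborhood-type classes $N_T = \{v\in N : N(v)\cap S = T\}$. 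This reduces the sub-problem to a polynomial-time solvable combinatorial optimization over a bounded number of classes. The $P_3$-condition itself can be built in by additional branching, for each $D_S$, on a constant number of potential positions of the induced $P_3$ inside $G[D]$ (entirely in $S$, straddling $S$ and $N$, inside $N$, between $N$ and a clique $K_i$, or inside a single clique $K_i$), which fixes at most three further vertices of $D$.

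The main obstacle is to make this reduction rigorous: given $D_S$, to actually compute in polynomial time a minimum extension $D_N\cup D_R$ satisfying both the total domination constraints and the $P_3$-condition. This requires a careful case analysis that combines the clique decomposition of $R$, the neighborhood-type classification of $N$ with respect to $S$, and the additional $P_3$-requirement; the whole procedure runs in polynomial time for every fixed $k$, since $|S|$, the number of neighborhood types, and the number of $P_3$-location branches are all constants depending only on $k$.
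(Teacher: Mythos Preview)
Your decomposition $V(G)=S\cup N\cup R$ with $R$ a disjoint union of cliques is the same starting point as the paper's, and branching over $D_S\subseteq S$ is fine. The gap is in the next step: you assert that, for each fixed $D_S$, finding a minimum extension $D_N\cup D_R$ (possibly with a prescribed $P_3$) is a ``polynomial-time solvable combinatorial optimization over a bounded number of classes'', but you do not justify this, and it is not true for the reason you give. Your classes $N_T=\{v\in N:N(v)\cap S=T\}$ control only the adjacency of $N$ to $S$; they say nothing about the adjacency of $N$ to the cliques of $R$, which is where the real work lies. Deciding which cliques $K_i$ get two internal vertices, which get dominated from $N$, and which vertices of $N$ to pick so that every clique and every $N$-vertex is totally dominated, is a set-cover--type problem over an unbounded bipartite incidence between $N$ and $\{K_1,\dots,K_m\}$. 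Nothing in your outline exploits $(P_4+kP_3)$-freeness to tame this interaction, and indeed computing $\gamma_t$ itself is not known to be polynomial on this class, so one cannot hope to solve the extension problem by brute optimisation.

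The paper's proof takes a completely different route: rather than computing minimum extensions, it shows through a long sequence of structural claims that if $G$ is a \no-instance then some minimum total dominating set has all but $f(k)$ of its matching edges inside so-called \emph{regular cliques} of $R$ (cliques with $P_4$-free closed neighbourhood, not dominated by a single $B$-vertex, and far from $k$ other such cliques). Regular cliques can be identified in polynomial time, their contribution to any minimum total dominating set is rigid (exactly two vertices, of a forced shape), and once they are peeled off the remaining problem has bounded size and can be brute-forced. Making this work requires the chain of Claims~\ref{HalfChoosable}--\ref{MostEdgesInRegularCliques}; your proposal contains no analogue of this machinery, and the hand-wave ``careful case analysis'' in your last paragraph is exactly where the several pages of argument are needed.
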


\begin{proof}
First observe that since \tdom\ is polynomial-time solvable on $P_4$-free graphs~\cite[Theorem 25]{semidom}, \contractd\ is polynomial-time solvable on $P_4$-free graphs by Proposition~\ref{prop:boundedtdom}(a) (note indeed that the class of $P_4$-free graphs is closed under edge contraction). Now assume that $k \geq 1$ and let $G$ be a $(P_4+kP_3)$-free graph containing an induced $P_4+(k-1)P_3$. Let $A\subseteq V(G)$ be such that $G[A]$ is isomorphic to $P_4+ (k-1)P_3$, set $B$ to be the set of vertices at distance one from $A$ and let $C = V(G) \setminus (A \cup B)$. Note that since $G$ is $(P_4+kP_3)$-free, $G[C]$ is a disjoint union of cliques. 

Let $\mathcal{K}$ be the set of maximal cliques in $G[C]$. Let $\mathcal{K}'\subseteq \mathcal{K}$ be the subset of cliques such that the closed neighborhood of each clique in $\mathcal{K}'$ does not contain an induced $P_4$ and no clique in $\mathcal{K}'$ is complete to a vertex in $B$. We call a clique $K\in \mathcal{K'}$ a \emph{regular clique} if there exist $k$ other cliques $K_1,\ldots,K_{k}\in\mathcal{K'}$ such that $K,K_1,\ldots,K_{k}$ have pairwise distance at least four from one another. We denote by $\mathcal{R}$ the set of regular cliques. Note that we can identify this set in polynomial time.

\begin{claim}\label{HalfChoosable}
Let $K\in \mathcal{K}$ be a clique of at least two vertices such that $N[K]$ is $P_4$-free. Then there exist $x,y\in K$ such that $N[xy]=N[K]$ and for any $v\in N[K]$, either $N[K]\subseteq N[vx]$ or $N[K]\subseteq N[vy]$.
\end{claim}

\begin{claimproof}
Let $x,y\in K$ be such that $|N[xy]|$ is maximum amongst all pairs of vertices in $K$. Suppose for a contradiction that there exists a vertex $b\in N[K]$ nonadjacent to both $x$ and $y$, and let $c\in K$ be a neighbor of $b$. Suppose  that $x$ has a neighbor $p_x$ which is adjacent to neither $c$ nor $y$, and that $y$ has a neighbor $p_y$ which is adjacent to neither $c$ nor $x$. Then $p_x$ and $p_y$ must be adjacent for otherwise $p_x,x,y,p_y$ would induce a $P_4$; but then, $p_x,p_y,y,c$ induce a $P_4$, a contradiction. It follows that $N[xy]$ is dominated by either $cx$ or $cy$, say $N[xy] \subseteq N[cx]$ without loss of generality. Now since $b \in N[cx] \setminus N[xy]$, we conclude that $|N[xy]| < |N[cx]|$ thereby contradicting the maximality of $|N[xy]|$. Hence, $xy$ dominates $N[K]$ and thus, $N[xy]=N[K]$.
Now consider $v\in N[K]\setminus\set{x,y}$ and assume without loss of generality that $v$ is adjacent to $x$. Suppose that $N[xy]\not\subseteq N[vx]$, that is, $y$ has a neighbor $p_y$ which is adjacent to neither $v$ nor $x$. Then $y$ must be adjacent to $v$ for otherwise $v,x,y,p_y$ would induce a $P_4$. Now if there exists a vertex $p_x\in N(x)$ which is adjacent to neither $y$ nor $v$, then either $p_x$ is not adjacent to $p_y$ in which case $p_x,x,y,p_y$ induce a $P_4$, or $p_x$ is adjacent to $p_y$ and $v,x,p_x,p_y$ induce a $P_4$, a contradiction in both cases. Thus, we conclude that if $N[xy]\not\subseteq N[vx]$ then $N[xy]\subseteq N[vy]$.
\end{claimproof}

\begin{claim}\label{V'Adjacency}
Let $K_1,\ldots, K_{k+1} \in\mathcal{R}$ be regular cliques which are pairwise at distance at least four from one another. For any $i\in[k+1]$, if $v\in V(G)\setminus N[K_i]$ is adjacent to a vertex in $N(K_i)\cap B$ then there exists $j\in[k+1]$, $j\neq i$, such that $v$ is complete to $N(K_j)\cap B$.
\end{claim}

\begin{claimproof}
Assume that there exists a vertex $v\in V(G) \setminus N[K_i]$ where $i\in[k+1]$, which is adjacent to a vertex $b_i\in N(K_i)\cap B$. Let $c_i\in K_i\cap N(b_i)$ and $c'_i\in K_i\setminus N(b_i)$ (recall that since $K_i$ is a regular clique, $b_i$ is not complete to $K_i$). Suppose for a contradiction that there exists no $j\in[k+1]$ such that $v$ is complete to $N(K_j)\cap B$. Then for every $j\in[k+1] \setminus \{i\}$, there exists a vertex $b_j\in N(K_j)\cap B$ which is nonadjacent to $v$. For every $j\in [k+1] \setminus \{i\}$, let $c_j\in K_j\cap N(b_j)$ and $c'_j\in K_j\setminus N(b_j)$. Then $\bigcup_{i=1}
^{k+1}\{b_i,c_i,c'_i\}\cup \{v\} $ induces a $P_4+kP_3$, a contradiction.
\end{claimproof}

\begin{claim}\label{RegularCliquesHaveTwoVerticesFromD}
Let $D$ be a minimum total dominating set of $G$ and let $K\in\mathcal{R}$ be a regular clique. Then $|D\cap N[K]|=2$.
\end{claim}

\begin{claimproof}
It is clear from the definition that $\vert D \cap N[K] \vert \geq 2$ for any regular clique $K$. Now suppose for a contradiction that there exists a clique $K_1 \in \mathcal{R}$ such that $\vert D \cap N[K_1] \vert \geq 3$ and let $K_2,\ldots,K_{k+1}$ be $k$ regular cliques such that $K_1,K_2,\ldots,K_{k+1}$ are pairwise at distance at least four from one another. It follows from the above that \[\left\vert D \cap \bigcup_{2 \leq i \leq k+1} N[K_i] \right\vert \geq 2k.\] Now by \Cref{HalfChoosable}, we have that for any $i\in [k+1]$, there exist $b_i \in B \cap N[K_i]$ and $c_i \in N(b_i) \cap K_i$ such that $b_ic_i$ dominates $N[K_i]$. But then $D' = (D \setminus \bigcup_{i \in [k+1]} N[K_i]) \cup \bigcup_{i \in [k+1]} \{b_i,c_i\}$ is a total dominating set of G; indeed, if $v \in V(G)$ is adjacent to a vertex in $D \cap N[K_i]$, for some $i \in [k+1]$, then either $v \in N[K_i]$ in which case $v \in N[b_ic_i]$, or $v \in V(G) \setminus N[K_i]$ and we conclude by Claim~\ref{V'Adjacency} that $v$ is complete to $N(K_j) \cap B$ for some $j \in [k+1]$ (in particular, $v$ is adjacent to $b_j$). But $\vert D' \vert < \vert D \vert$, a contradiction to the minimality of $D$.
\end{claimproof}

\begin{remark}
\label{rem:tdset}
Note that by the proof of Claim \ref{RegularCliquesHaveTwoVerticesFromD}, we have that for any minimum total dominating set $D$ of $G$, any $k+1$ regular cliques $K_1,\ldots,K_{k+1}\in\mathcal{R}$ which are pairwise at distance at least four from one another, any $b_i\in B\cap N(K_i)$ and $c_i\in V(K_i) \cap N(b_i)$ such that $N[K_i]\subseteq N[b_ic_i]$ with $i\in[k+1]$, 
\[\bigcup_{i=1}^{k+1}\set{b_i,c_i}\,\cup \left( D\setminus \bigcup_{1\leq i\leq k+1} N[K_i]\right)\]
is a minimum total dominating set of $G$.
\end{remark}

\begin{claim}\label{AllRegularCliquesAtDistanceFour}
If there are two regular cliques at distance at most three from one another then $G$ is a \yes-instance for \contractd.
\end{claim}

\begin{claimproof}
Assume that such two regular cliques exist and let $K_1$ and $K'_1$ be two regular cliques such that $d(K_1,K'_1) = \min_{K,K'\in\mathcal{R}} d(K,K')$ (note that by assumption, $d(K_1,K'_1) \leq 3$). Now suppose to the contrary that $G$ is a \no-instance for \contractd. Let $K_2,\ldots, K_{k+1}$ (resp. $K'_2,\ldots,K'_{k+1}$) be $k$ regular cliques such that $K_1,\ldots,K_{k+1}$ (resp. $K'_1,\ldots,K'_{k+1}$) have pairwise distance at least four from one another, and denote by $\mathcal{S}=K_1\cup\ldots\cup K_{k+1}$ and $\mathcal{S}'=K'_1\cup\ldots\cup K'_{k+1}$. By \Cref{HalfChoosable}, we have that for every $i\in[k+1]$, there exist $b_i\in N(K_i)\cap B$ and $c_i\in K_i \cap N(b_i)$ (resp. $b'_i\in N(K'_i)\cap B$ and $c'_i\in K'_i \cap N(b'_i)$) such that $N[K_i]\subseteq N[b_ic_i]$ (resp. $N[K'_i]\subseteq N[b'_ic'_i]$). In the following, let $D$ be a minimum total dominating set of $G$.

Suppose first that $K_1$ and $K'_1$ have a common neighbor $v$, that is, $d(K_1,K'_1) =2$. Then by \Cref{HalfChoosable}, there exists $c\in K_1 \cap N(v)$ such that $N[K_1] \subset N[cv]$. As $c$ has a neighbor in $N(K'_1) \cap B$ (namely $v$), it follows from \Cref{V'Adjacency} that there exists $j\in[k+1]$ such that $c$ is complete to $N[K'_j]\cap B$. By \Cref{rem:tdset}, we then have that $D'= (D\setminus N[\mathcal{S}]) \cup \set{c,v,b_2,c_2,\ldots,b_{k+1},c_{k+1}}$ is a minimum total dominating set of $G$. Similarly by \Cref{rem:tdset}, we conclude that $D''= (D'\setminus N[\mathcal{S}']) \cup\set{b'_1,c'_1,\ldots,b'_{k+1},c'_{k+1}}$ is a minimum total dominating set of $G$. But as $c$ belongs to $D''$ and is adjacent to $b'_j$, it follows that $D''$ contains a $P_3$, a contradiction by \Cref{theorem:1totcontracdom}. Thus, $d(K_1,K'_1) \geq 3$; in particular, no two regular cliques in $G$ have a common neighbor by minimality of $d(K_1,K'_1)$. 

Now since $d(K_1,K'_1) \leq 3$ by assumption, there must exist $w_1 \in N(K_1) \cap B$ and $w'_1 \in N(K'_1) \cap B$ such that $w_1$ and $w'_1$ are adjacent. By \Cref{HalfChoosable}, there exist $v_1 \in K_1 \cap N(w_1)$ and $v'_1 \in K'_1 \cap N(w'_1)$ such that $N[K_1] \subseteq N[w_1v_1]$ and $N[K'_1] \subseteq N[w'_1v'_1]$. But then by \Cref{V'Adjacency}, we have that $D'= (D \setminus N[\mathcal{S}\cup\mathcal{S}']) \cup \{w_1,v_1,w'_1,v'_1\} \cup\bigcup_{i=2}
^{k+1}\set{b_i,b'_i,c_i,c'_i}$ is a total dominating set of $G$ of size at most that of $D$; indeed, since no two regular cliques have a common neighbor, it follows from \Cref{RegularCliquesHaveTwoVerticesFromD} that $\vert D \cap N[\mathcal{S} \cup \mathcal{S}'] \vert = 2\left\vert\set{K_1,\ldots,K_{k+1},K'_1,\ldots,K'_{k+1}}\right\vert$ (Note that this number is not necessarily $2(k+1)$). But $D'$ contains a $P_3$, a contradiction by \Cref{theorem:1totcontracdom}.
\end{claimproof}

In the following, given a total dominating set $D$ of $G$, we call an edge $xy$, with $x,y \in D$, a \textit{$B$-edge} (resp. \textit{$C$-edge}; \textit{$B-C$-edge}) if $x,y\in B$ (resp. $x,y\in C$; $x\in B$ and $y\in C$). Given a $B-C$-edge $xy$, we call its endvertices the $B$-vertex and the $C$-vertex, according to the set in which they are contained. Recall that by \Cref{theorem:1totcontracdom}, if $G$ is a \no-instance for \contractd\ then every minimum total dominating set is an induced matching. 

\begin{claim}\label{BEdgesHavePNInC}
Let $D$ be a minimum total dominating set of $G$. If $G$ is a \no-instance for \contractd\ then there are at most $\left(k+1\right)\left(k+|A|\right)-1$ $B$-edges in $D$ which have no private neighbors in $C$.
\end{claim}

\begin{claimproof}
Assume that $G$ is a \no-instance for \contractd\ and let $\ell=k+\vert A\vert$. Suppose for a contradiction that there exist $\ell\left(k+1\right)$ $B$-edges $x_1y_1,\ldots,x_{\ell(k+1)}y_{\ell(k+1)}$ with no private neighbor in $C$. For every $i \in [k+1]$, let $X_i = \set{x_{(i-1)\ell+1},\ldots,x_{(i-1)\ell+k+\vert A\vert}}$. If for some $i\in [k+1]$ there were no vertex $v\in C$ such that $N(v)\cap D\subseteq X_i$, then $(D\setminus X_i)\cup A$ would be a total dominating set of $G$ containing fewer vertices than $D$, a contradiction. Now consider $v_1 \in C$ such that $N(v_1) \cap D \subseteq X_1$ and let $v_2,\ldots, v_q \in C$ be a longest sequence of vertices defined as follows: for every $i \in [q] \setminus \{1\}$, $N(v_i) \cap D \subseteq X_i$ and $v_i$ is anti-complete to $\{v_1,\ldots,v_{i-1}\}$. We claim that $q=k+1$. Indeed, if $q < k+1$ then every vertex $v \in C$ with $N(v) \cap D \subseteq X_{q+1}$ is adjacent to some vertex in $\{v_1,\ldots,v_q\}$ by maximality of the sequence. But then $(D \setminus X_{q+1}) \cup \{v_1,\ldots,v_q\} \cup A$ is a total dominating set of $G$ of cardinality at most $\vert D \vert$ containing a $P_3$, which cannot be by \Cref{theorem:1totcontracdom}. Thus, $q = k+1$. Now for every $i \in [k+1]$, $v_i$ has to be adjacent to at least two vertices in $X_i$ as no vertex in $X_i$ has a private neighbor in $C$, say $v_i$ is adjacent to $x_{(i-1)\ell + 1}$ and $x_{(i-1)\ell +2}$ without loss of generality. But now the set of vertices 
\[\set{y_1,x_1,v_1,x_2}\cup\bigcup_{i=2}^{k+1}\set{x_{(i-1)\ell+1},x_{(i-1)\ell+2},v_i}\]
induces a $P_4+kP_3$, a contradiction.
\end{claimproof}

Let $D$ be a total dominating set of $G$. We say that a $B$-edge $xy$ can be \emph{turned into a $B-C$-edge} if there exists a vertex $z\in C$ such that $(D\setminus\set{x})\cup\set{z}$ or $(D\setminus\set{y})\cup\set{z}$ is a total dominating set. Analogously, if $xy$ is a $B-C$-edge with $x\in B$ and $y\in C$, we say that $xy$ can be \emph{turned into a $C$-edge} if there is a vertex $z\in C$ such that $(D\setminus\set{x})\cup\set{z}$ is a total dominating set of $G$.

\begin{claim}\label{NotTooManyBEdges}
Let $D$ be a minimum total dominating set of $G$. If $G$ is a \no-instance for \contractd\ and there exists a $B$-edge which has a private neighbor in $C$ and cannot be turned into a $B-C$-edge, then there are at most $|A|+2k$ $B$-edges.
\end{claim}

\begin{claimproof}
Assume that $G$ is a \no-instance for \contractd\ and there exists a $B$-edge $x_1y_1$ such that $x_1$ has a private neighbor $p \in C$ and $x_1y_1$ cannot be turned into a $B-C$-edge. Suppose to the contrary that there exist $|A|+2k$ additional $B$-edges $x_2y_2,\ldots,x_{|A|+2k+1}y_{|A|+2k+1}$. If every private neighbor of $y_1$ is adjacent to $p$, then $(D\setminus\set{y_1})\cup\set{p}$ is a minimum total dominating set of $G$, that is, we can turn the $B$-edge $x_1y_1$ into a $B-C$-edge which is contrary to our assumption. Thus, $y_1$ has a private neighbor $p'$ which is not adjacent to $p$. Set $V_1 = \set{p,x_1,y_1,p'}$ and let $V_2, \ldots, V_q$ be a longest sequence of sets of vertices defined as follows. For every $i \in [q] \setminus \{1\}$, there is a $j_i\in [2,\ldots,\vert A\vert+2k+1]$ and a vertex $v_i\in C$ such that  $V_i=V_{i-1}\cup\set{x_{j_i},y_{j_i},v_i}$, $x_{j_i}\not\in V_{i-1}$, $v_i$ is adjacent to $x_{j_i}$ and $v_i$ is anti-complete to $V_{i-1}\cup{y_{j_i}}$. %let $j\in\set{2,\ldots,|A|+2k+1}$ be such that $x_j\not\in V_{i-1}$ and there exists a vertex $v\in C$ which is adjacent to $x_j$ and anti-complete to $V_{i-1} \cup \{y_j\}$. Then $V_i = V_{i-1} \cup \{x_j,y_j,v\}$. 
We now claim that $q \geq k + 1$. Indeed, observe first that for any $j\in[2,\ldots,|A|+2k+1]$ such that $x_j \not\in V_q$, every neighbor of $x_j$ in $C$ must be adjacent to some vertex in $V_q \cup \{y_j\}$ by maximality of the sequence. Now if $q < k+1$, there are at least $|A|+k+1$ indices $j\in[|A|+2k+1]$ such that $x_j \notin V_q$ and so, $D' = (D\setminus\set{x_j\colon\, j\in[|A|+2k+1], x_j\not\in V_q})\cup V_q\cup A$ is a total dominating set of $G$ of size at most $\vert D \vert$. But $D'$ contains a $P_3$, a contradiction by \Cref{theorem:1totcontracdom}. Thus, $q\geq k+1$; but now $V_{k+1}$ induces a $P_4+kP_3$, a contradiction.
\end{claimproof}

\begin{claim}\label{NotTooManyBCEdges}
Let $D$ be a minimum total dominating set. If $G$ is a \no-instance for \contractd\ then all but at most $k+|A|$ $B-C$-edges can be turned into $C$-edges.
\end{claim}

\begin{claimproof}
Assume that $G$ is a \no-instance for \contractd. First, suppose for a contradiction that there exist $|A|$ $B-C$-edges $b_1c_1,\ldots,b_{|A|}c_{|A|}$, where $b_i\in B$ and $c_i\in C$ for all $i \in [\vert A \vert]$, such that $c_i$ has no private neighbor in $C$ for every $i\in[|A|]$. Then $(D\setminus\set{c_1,\ldots,c_{|A|}})\cup A$ is a minimum total dominating set of $G$ containing a $P_3$, a contradiction by \Cref{theorem:1totcontracdom}. Thus, there exist at most $|A|-1$ $B-C$-edges such that the $C$-vertex has no private neighbor in $C$. 

Now suppose to the contrary that there exist $k+|A|+1$ $B-C$-edges $b_1c_1,\ldots,b_{k+|A|+1}c_{k+|A|+1}$, where $b_i\in B$ and $c_i\in C$ for all $i \in [k+\vert A \vert +1]$, which cannot be turned into $C$-edges. By the above, we can assume without loss of generality that $c_1$ has a private neighbor $p'$ in $C$. If every private neighbor of $b_1$ is adjacent to $p'$ then $(D \setminus \{b_1\}) \cup \{p'\}$ is a minimum total dominating set of $G$ thereby contradicting the fact that $b_1c_1$ cannot be turned into a $C$-edge. Thus, $b_1$ must have a private neighbor $p$ which is not adjacent to $p'$. We now claim that there are at most $k-1$ indices $i\in\set{2,\ldots,|A|+k+1}$ such that there exists a vertex $v_i\in C$ which is adjacent to $c_i$ but not to $b_i$ and $p$. Indeed, if there were $k$ such indices, say, without loss of generality, for every $i \in [k+1] \setminus \{1\}$, there exists $v_i \in C$ such that $v_i$ is adjacent to $c_i$ but not to $b_i$ and $p$, then $p',c_1,b_1,p,v_2,c_2,b_2,\ldots,v_{k+1},c_{k+1},b_{k+1}$ would induce a $P_4+kP_3$; indeed, since for every $i,j\in[k+1]$, $c_i$ and $c_j$ are non-adjacent and as $v_i$ is contained in $C$ it follows that if $v_i$ is adjacent to $c_i$ then it is non-adjacent to $c_j$ and $v_j$. Thus, we obtain a contradiction. It follows that there are at least $|A|+1$ indices $i\in\set{2,\ldots,|A|+k+1}$ such that every neighbor of $c_i$ in $C$ is adjacent to $b_i$ or $p$, say without loss of generality indices 2
to $|A|+2$. But then $(D\setminus\set{c_2,\ldots,c_{|A|+2}})\cup A\cup\set{p}$ is a minimum total dominating set containing a $P_3$, a contradiction by \Cref{theorem:1totcontracdom}.
\end{claimproof}

\begin{claim}\label{MostEdgesAreCEdges}
If $G$ is a \no-instance for \contractd\ then every minimum total dominating set of $G$ is an induced matching and there exists a minimum total dominating set of $G$ in which all but at most $\left(k+2\right)\left(k+|A|\right)+\vert A\vert-1$ edges are contained in $C$.
\end{claim}

\begin{claimproof}
First note that since $G$ is a \no-instance for \contractd, every minimum total dominating set of $G$ is an induced matching by \Cref{theorem:1totcontracdom}. Now let $D$ be a minimum total dominating set of $G$ containing as few $B$-edges as possible amongst all minimum total dominating sets of $G$. Then either no $B$-edge has a private neighbor in $C$ in which case $D$ contains at most $\left(k+1\right)\left(k+|A|\right)-1$ $B$-edges by \Cref{BEdgesHavePNInC}; or there exists a $B$-edge with a private neighbor in $C$ in which case $D$ contains at most $\vert A \vert + 2k$ $B$-edges by \Cref{NotTooManyBEdges} (indeed, note that by the choice of $D$, no $B$-edge can be turned into a $B-C$-edge). It then follows from \Cref{NotTooManyBCEdges} that we can modify $D$ in order to obtain a minimum total dominating set of $G$ which has at most $k+|A|$ $B-C$-edges, and as the number of edges intersecting $A$ is trivially not more than $\vert A\vert$, the claim follows. 
\end{claimproof}

\begin{claim}\label{P4FreeImpliesCEdgeDominatesneighborhood}
Assume that $G$ is a \no-instance for \contractd{} and let $D$ be a minimum total dominating set of $G$. Let $x_1y_1$ and $x_2y_2$ be two $C$-edges contained in $K_1$ and $K_2$ respectively, where $K_1,K_2\in\mathcal{K}$. If $N[K_1]$ and $N[K_2]$ are $P_4$-free then $K_1$ and $K_2$ have distance at least four from one another.
\end{claim}

\begin{claimproof}
Assume that $N[K_1]$ and $N[K_2]$ are $P_4$-free. First, suppose for a contradiction that $d(K_1,K_2) = 2$, that is, there exists a vertex $b\in N(K_1)\cap N(K_2)$. By \Cref{HalfChoosable}, there exist $c_1\in V(K_1) \cap N(b)$ and $c_2\in V(K_2) \cap N(b)$ such that $N[K_1] \subseteq N[bc_1]$ and $N[K_2] \subseteq N[bc_2]$. But then, $(D\setminus\set{x_1,x_2,y_1,y_2})\cup\set{c_1,c_2,b}$ is a total dominating set of $G$ containing fewer vertices than $D$, a contradiction. Thus, $d(K_1,K_2) > 2$. 

Now suppose for a contradiction that $d(K_1,K_2) = 3$, that is, there exist $b_1\in N(K_1)\cap B$ and $b_2\in N(K_2)\cap B$ such that $b_1$ and $b_2$ are adjacent. By \Cref{HalfChoosable}, there exist $c_1\in V(K_1) \cap N(b_1)$ and $c_2\in V(K_2) \cap N(b_2)$ such that $N[K_1] \subseteq N[c_1b_1]$ and $N[K_2] \subseteq N[c_2b_2]$. But then, $(D\setminus\set{x_1,y_1,x_2,y_2})\cup\set{b_1,b_2,c_1,c_2}$ is a minimum total dominating set containing a $P_3$, a contradiction by \Cref{theorem:1totcontracdom}. Thus, $d(K_1,K_2) > 3$.
\end{claimproof}

\noindent
The following claim is a straightforward corollary of \Cref{P4FreeImpliesCEdgeDominatesneighborhood}.

\begin{claim}\label{P4FreeSuffices}
Assume that $G$ is a \no-instance of \contractd and let $D$ be a minimum total dominating set of $G$. If there exist $k+1$ cliques $K_1,\ldots,K_{k+1} \in \mathcal{K}$ containing $C$-edges such that for any $i \in [k+1]$, $N[K_i]$ is $P_4$-free and there exists no vertex $b \in B$ complete to $K_i$, then $K_1,\ldots,K_{k+1}$ are regular cliques.
\end{claim}

\begin{claim}\label{NotTooManyCEdgesDominatedByASingleVertex}
Assume that $G$ is a \no-instance to \contractd and let $D$ be a minimum total dominating set of $G$. Then there are at most $|A|-1$ cliques $K\in\mathcal{K}$ containing a $C$-edge, for which there exists a vertex $b\in N(K)\cap B$ such that $N[K]\subseteq N[b]$.
\end{claim}
\begin{claimproof}
Suppose for a contradiction that there are $|A|$ cliques $K_1,\ldots,K_{\vert A \vert} \in \mathcal{K}$ containing $C$-edges $x_1y_1,\ldots,x_{\vert A \vert}y_{\vert A \vert}$ respectively, where for any $i \in [\vert A \vert]$, there exists a vertex $b_i\in N(K_i)\cap B$ such that $N[K_i]\subseteq N[b_i]$. Then, $(D\setminus\set{x_1,y_1,\ldots,x_{|A|},y_{|A|}})\cup\set{b_1,\ldots,b_{|A|}}\cup A$ is a minimum total dominating set containing a $P_3$, a contradiction by \Cref{theorem:1totcontracdom}.
\end{claimproof}

\begin{claim}\label{ClaimOfProposition}
Let $D$ be a minimum total dominating set of $G$ and let $\mathcal{C}$ be a set of C-edges pairwise at distance at least three from one another. Then for any subset $E \subseteq \mathcal{C}$ of cardinality $k+1$ for which there exists an induced path $G_E$ of length three containing exactly two vertices from two different edges of $E$, there exist a set $T \subseteq N(\mathcal{C})$ of cardinality at most $2(k-1)$ and a set $N \subseteq\{xy \in \mathcal{C}: N(xy) \cap T =\varnothing\}$ of cardinality at most $\frac{k
^2}{2}-\frac{k}{2}+2$, such that every private neighbor of an edge in $\mathcal{C} \setminus N$ is adjacent to $V(G_E) \cup T$, and $\set{xy\in\mathcal{C}\colon \set{x,y}\cap V(G_{E})\neq\varnothing}\subseteq N$. 
\end{claim}

\begin{claimproof}
Let $E\subseteq \mathcal{C}$ be a set of cardinality $k+1$ for which there exists an induced $P_4$, denoted by $G_E$, containing exactly two vertices from two different edges of $E$, and denote by $xy$ and $x'y'$ the two edges of $E$ such that  $\set{x,y}\cap V(G_E)\neq\varnothing$ and $\set{x',y'}\cap V(G_E)\neq\varnothing$. Let $S_1\subseteq N(\mathcal{C}\setminus \set{xy,x'y'})\cap B$ be a maximum independent set such that every vertex in $S_1$ is adjacent to exactly one endvertex from an edge in $\mathcal{C}$ and not adjacent to $G_E$, and every edge in $\mathcal{C}$ has at most one neighbor in $S_1$. Note that $V(G_E) \cup S_1\cup\set{v,w\colon\,vw\in \mathcal{C},\, N(vw)\cap S_1\neq\varnothing}$ induces a $P_4+|S_1|P_3$ since the vertices in $S_1$ are not adjacent to $G_E$ or to each other by construction, and any edge $vw \in \mathcal{C}$ such that $N(vw) \cap S_1 \neq \varnothing$ has exactly one neighbor $u \in S_1$ where either $u \in N(v) \setminus N(w)$ or $u \in N(w) \setminus N(v)$ by construction. It follows that $|S_1|<k$. We construct a sequence of sets of vertices according to the following procedure.

\begin{itemize}
\item[1.] Initialize $i=1$ and set $C_1=E_1=\set{e\in \mathcal{C}\colon\, N(e)\cap S_1\neq\varnothing}\cup\set{xy,x'y'}$.
\item[2.] Increase $i$ by one.
\item[3.] Let $S_i \subseteq N(\mathcal{C})\cap B \setminus \left(N(G_E)\cup N(C_{i-1})\right)$ be a maximum stable set such that every vertex in $S_i$ is adjacent to exactly one endvertex of an edge in $\mathcal{C}$ and every edge in $\mathcal{C}$ has at most one neighbor in $S_i$. 

Set $E_i=\set{e\in\mathcal{C}\colon\, N(e)\cap S_i\neq\varnothing }$ and $C_i=C_{i-1}\cup E_i$. 
\item[4.] If $|S_i|=|S_{i-1}|$, stop the procedure. Otherwise, return to 2.
\end{itemize}

Consider the value of $i$ at the end of the above procedure (note that $i \geq 2$). Now observe that the following holds: for any edge $vw$ in $\mathcal{C} \setminus C_{i-1}$, every private neighbor $b$ of $vw$ is adjacent to $G_E$ or $S_{i-1}$ for otherwise the procedure would have output $S_{i-1} \cup \{b\}$ in place of $S_{i-1}$. Furthermore, any private neighbor $b$ of an edge in $E_{i-1}$ must be adjacent to $G_E$ or $S_i$ for otherwise the procedure would have output $S_i \cup \{b\}$ in place of $S_{i-1}$ (recall that by construction, $\vert S_{i-1} \vert = \vert S_i \vert$). Thus, it suffices to set $T = S_i \cup S_{i-1}$ and $N = C_{i-2}$ if $i > 2$ (otherwise $N = \set{xy,x'y'}$). Observe that for any $1 \leq p < q \leq i-1$, $\vert S_q \vert < \vert S_p \vert$ which implies that $i \leq k+1$ as $\vert S_1 \vert \leq k-1$; in particular, $\vert S_j \vert \leq k-j$ for any $j \in [i-1]$ and so 
\[ \vert N\vert=\vert C_{i-2} \vert = 2 + \sum_{j=1}^{i-2} \vert E_j \vert = 2 + \sum_{j=1}^{i-2} \vert S_j \vert \leq 2 + \sum_{j=1}^{k-1} (k-j) = \frac{k^2}{2} -\frac{k}{2} + 2.\]
\end{claimproof}

\begin{claim}\label{BigLemma}
Assume that $G$ is a \no-instance of \contractd{} and let $D$ be a minimum total dominating set of $G$. Let $\mathcal{C}$ be a set of $C$-edges pairwise at distance at least three from one another, such that for every subset $E\subseteq\mathcal{C}$ of cardinality $k+1$, there exists an induced $P_4$ containing exactly two vertices from two different edges in $E$. Then \[|\mathcal{C}|< \left(\frac{k
^2}{2} +\frac{3k}{2}\right)\left(\frac{k
^2}{2} +\frac{3k}{2} +1\right)+ k+1.\]
\end{claim}

\begin{claimproof}
Suppose for a contradiction that \[|\mathcal{C}|\geq \left(\frac{k^2}{2} +\frac{3k}{2}\right)\left(\frac{k^2}{2} +\frac{3k}{2} +1\right)+ k+1.\] 
For a set $E\subseteq\mathcal{C}$ of cardinality $k+1$, denote by $G_E$ an induced $P_4$ which contains exactly two vertices from two different edges in $E$. Note that as any two edges in $\mathcal{C}$ have distance at least three from one another, no other edge in $\mathcal{C}$ can be adjacent to $G_E$.

Now let $E_1\subseteq\mathcal{C}$ be a set of $k+1$ edges and let $G_{E_1}$ be as defined above. By \Cref{ClaimOfProposition}, there exist a set $T_1 \subseteq N(\mathcal{C})$ of cardinality at most $2(k-1)$ and a set $N_1 \subseteq\{xy \in \mathcal{C}: N(xy) \cap T_1 =\varnothing\}$ of cardinality at most $\frac{k
^2}{2}-\frac{k}{2}+2$, such that every private neighbor of an edge in $\mathcal{C} \setminus N_1$ is adjacent to $V(G_{E_1}) \cup T_1$, and $\set{vw\in\mathcal{C}\colon \{v,w\}\cap V(G_{E_1})\neq\varnothing}\subseteq N_1$. We construct a sequence of sets of vertices according to the following procedure.

\begin{itemize}
\item[1.] Initialize $i=1$ and set $F_1= N_1\cup\set{e\in\mathcal{C}\colon\, N(e)\cap T_1\neq\varnothing}$.
\item[2.] If $i > \frac{k^2}{2} +\frac{3k}{2}$, stop the procedure. Otherwise, increase $i$ by one.
\item[3.] Let $E_i\subseteq\mathcal{C}\setminus F_{i-1}$ be a set of $k+1$ edges. By \Cref{ClaimOfProposition}, there exist a set $T_i\subseteq B\cap N(\mathcal{C}\setminus F_{i-1})$ of cardinality at most $2(k-1)$ and a set $N_i \subseteq\{xy \in \mathcal{C}: N(xy) \cap T_i =\varnothing\}$ of cardinality at most $\frac{k^2}{2}-\frac{k}{2}+2$, such that every private neighbor of an edge in $\mathcal{C}\setminus\left( F_{i-1}\cup N_i\right)$ is adjacent to $V(G_{E_i}) \cup T_i$, and $\set{vw\in\mathcal{C}\colon \{v,w\} \cap V(G_{E_i})\neq\varnothing}\subseteq N_i$. 

Set $F_i=F_{i-1}\cup N_i\cup\set{e\in\mathcal{C}\colon\, N(e)\cap T_i\neq\varnothing}$ and return to 2.
\end{itemize}

Note that the procedure ends with $i = \frac{k^2}{2} +\frac{3k}{2} + 1$. Indeed, at each iteration, the set $F_j$ increases by at most $\frac{k^2}{2} -\frac{k}{2} + 2+2(k-1)$ and thus at the $j^{th}$ iteration, there are at least 
\[
|\mathcal{C}|-j(\frac{k^2}{2} -\frac{k}{2} + 2+2(k-1)) \geq |\mathcal{C}|-(\frac{k^2}{2} +\frac{3k}{2} + 1)(\frac{k^2}{2} -\frac{k}{2} + 2+2(k-1)) \geq k+1
\]
edges left in $\mathcal{C}$ to form the set $E_{j+1}$. Finally, we increase once more $i$ by one (that is, set $i =  \frac{k^2}{2} +\frac{3k}{2} + 2$) and let $E_i\subseteq\mathcal{C} \setminus F_{i-1}$ be a subset of cardinality $k+1$. Applying \Cref{ClaimOfProposition} with $\mathcal{C}$ (rather than $\mathcal{C} \setminus F_{i-1}$ as above) and $E_i$, we obtain that there exist a set $T_i\subseteq B\cap N(\mathcal{C})$ of cardinality at most $2(k-1)$ and a set $N_i \subseteq\{xy \in \mathcal{C}: N(xy) \cap T_i =\varnothing\}$ of cardinality at most $\frac{k^2}{2}-\frac{k}{2}+2$, such that every private neighbor of an edge in $\mathcal{C}\setminus N_i$ is adjacent to $V(G_{E_i}) \cup T_i$, and $\set{vw\in\mathcal{C}\colon \{v,w\} \cap V(G_{E_i})\neq\varnothing}\subseteq N_i$. Observe that $T_i$ could intersect the sets $V(G_{E_j})$ or $T_j$ for $j<i$, but every private neighbor of an edge in $\mathcal{C}\setminus N_i$ is adjacent to $G_{E_i}$ or $T_i$.

Note that by construction for any $j,j'\in [i-1]$ the sets $V(G_{E_j})\cup T_j$ and $V(G_{E_{j'}})\cup T_{j'}$ are disjoint and there exists no edge in $\mathcal{C}$ which is adjacent to both $V(G_{E_j})\cup T_j$ and $V(G_{E_{j'}})\cup T_{j'}$. Since $T_i$ contains at most $2\left(k-1\right)$ vertices and every vertex in $T_i$ is adjacent to exactly one edge in $\mathcal{C}$, there are at most $2(k-1)$ edges in $\mathcal{C}$ which are adjacent to $T_i$. As for each of these edges there is at most one index $j\in[i-1]$ such that the edge is adjacent to $V(G_{E_j})\cup T_j$, it follows that there are at least $\frac{k^2}{2} -\frac{k}{2} + 2+1$ indices $j \in [i-1]$ such that the sets $V(G_{E_j})\cup T_j$ and $V(G_{E_i})\cup T_i$ are disjoint and there exists no edge in $\mathcal{C}$ which is adjacent to both $V(G_{E_j})\cup T_j$ and $V(G_{E_i})\cup T_i$. Thus, there are at least $\frac{k^2}{2} -\frac{k}{2} + 2+1-\vert N_i\vert\geq 1$ indices $j\in [i-1]$ such that the sets $V(G_{E_j})\cup T_j$ and $V(G_{E_i})\cup T_i$ are disjoint, there exists no edge in $\mathcal{C}$ which is adjacent to both $V(G_{E_j})\cup T_j$ and $V(G_{E_i})\cup T_i$, and every private neighbor of an edge in $\set{vw\in\mathcal{C}\colon\, \set{v,w} \cap N(V(G_{E_j})\cup T_j)\neq\varnothing}$ is adjacent to $V(G_{E_i})\cup T_i$. Let $j$ be one of these indices. By construction, it holds that every private neighbor of an edge in $E_{i}$ is also adjacent to $V(G_{E_j})\cup T_j$. 
Let $x_jy_j$ and $x'_jy'_j$ be the two edges in $\mathcal{C}$ which are adjacent to $G_{E_j}$ and assume, without loss of generality, that $x_j$ and $x'_j$ are contained in $V(G_{E_j})$. Let $x_iy_i$ and $x'_iy'_i$ be the two edges in $\mathcal{C}$ which are adjacent to $G_{E_i}$ and assume, without loss of generality, that $x_i$ and $x'_i$ are contained $V(G_{E_i})$. Let $Y=\set{y\in V(G)\colon\, \exists xy\in \mathcal{C}, x\in N(T_i\cup T_j)}$ be the set of all vertices which are adjacent to neither $T_i$ nor $T_j$ but belong to an edge in $\mathcal{C}$ that is adjacent to $T_i$ or $T_j$. Then, $D\setminus (\{y_i,y'_i,y_j,y'_j\}\cup Y) \cup V(G_{E_i}) \cup V(G_{E_j})\cup T_i\cup T_j$ is a minimum total dominating set containing a $P_3$, a contradiction by \Cref{theorem:1totcontracdom}.
\end{claimproof}

\begin{claim}\label{MostEdgesInRegularCliques}
If $G$ is a \no-instance for \contractd\ then there exists a minimum total dominating set of $G$ in which all but at most 
\[
3\left(|A|-1\right)+\left(\frac{k^2}{2} +\frac{3k}{2}\right)\left(\frac{k^2}{2} +\frac{3k}{2} + 1\right)+ k+1
\] 
edges are contained in regular cliques.
\end{claim}

\begin{claimproof}
Assume that $G$ is a \no-instance for \contractd\ and let $\mathcal{D}$ be the set of all minimum total dominating sets of $G$ in which all but at most $\left(k+2\right)\left(k+|A|\right)+\vert A\vert-1$ edges are $C$-edges (note that $\mathcal{D}$ is nonempty by \Cref{MostEdgesAreCEdges}). For every $D\in\mathcal{D}$ and $S\subseteq B$, let $N_e(D,S)$ be the number of $C$-edges which are adjacent to $S$. For any $D\in\mathcal{D}$, let 
\[
S(D)=\argmax_{S\subseteq B} N_e(D,S) - |S|.
\]
In the following, let $D\in\mathcal{D}$ be a minimum total dominating set of $G$ such that $\max_{S\subseteq B} N_e(D,S)-|S|$ is maximum amongst all sets in $\mathcal{D}$.

First, suppose for a contradiction that there exists a set $S\subseteq B$ which is adjacent to $s=N_e\left(D,S\right)$ $C$-edges $x_1y_1,\ldots,x_sy_s$, such that $s-|S|\geq |A|$. Assume without loss of generality that $S$ is adjacent to $x_i$ for every $i\in[s]$. Then $\left(D\setminus\set{y_1,\ldots,y_s}\right)\cup S\cup A$ is a minimum total dominating set of $G$ containing a $P_3$, a contradiction by \Cref{theorem:1totcontracdom}. 

Now consider a set $S\in S(D)$ of minimum cardinality amongst all sets in $S(D)$. By the above, we have that $N_e\left(D,S\right)\leq |A|-1+|S|$. We claim that every $v\in S$ must be adjacent to at least two $C$-edges to which no other vertex in $S$ is adjacent. Indeed, if $v$ were adjacent to no $C$-edge to which no other vertex in $S$ was adjacent then $S'=S\setminus\set{v}$ would be such that $N_e(D,S')=N_e(D,S)$ and $\vert S'\vert<\vert S\vert$. But then, $N_e(D,S')-\vert S'\vert>N_e(D,S)-\vert S\vert$ thereby contradicting the fact that $S$ belongs to $S(D)$. If $v$ were adjacent to only one $C$-edge to which no other vertex in $S$ is adjacent, then removing $v$ from $S$ would decrease both $|S|$ and $N_e\left(D,S\right)$ by one, leaving the difference unchanged and thus contradicting minimality of $|S|$. It follows that $2|S|\leq N_e\left(D,S\right)$ which combined with the inequality above implies that $|S|<|A|$ and so, $N_e\left(D,S\right)\leq 2\left(|A|-1\right)$. 

Now let $\mathcal{C}$ be the set of $C$-edges which are not adjacent to $S$. We may assume that $D$ satisfies the following property: for any edge $xy \in \mathcal{C}$ contained in a clique $K \in \mathcal{K}$, if there exist $x',y' \in V(K)$ such that $N[xy] \subseteq N[x'y']$ then $N[xy] = N[x'y']$. Indeed, suppose that there exists an edge $xy\in\mathcal{C}$ contained in a clique $K\in\mathcal{K}$ for which there exist $x',y'\in K$ such that $N[xy] \subsetneq N[x'y']$. Then, it suffices to consider $D' = (D \setminus \set{x,y}) \cup \set{x',y'}$ in place of $D$. Clearly, $D'$ is still a minimum total dominating set of $G$ and the value of $\max_{S'\subset B}N_e(D',S')-|S'|$ does not increase by choice of $D$. Furthermore, the value of $\max_{S'\subset B}N_e(D',S')-|S'|$ does not decrease: since $N[xy] \subsetneq N[x'y']$, we have that $N_e(D,S) \leq N_e(D',S)$ and so by choice of $D$ and because $S \in S(D)$, we conclude that $\max_{S'\subseteq B} N_e(D,S')-|S'| = N_e(D,S) - \vert S \vert = N_e(D',S) - \vert S \vert = \max_{S'\subseteq B} N_e(D',S')-|S'|$. In particular, it follows that $x'y'$ cannot be adjacent to $S$ and thus, the set of $C$-edges in $D'$ not adjacent to $S$ has the same cardinality as $\mathcal{C}$. By replacing each such edge of $\mathcal{C}$, the resulting total dominating set satisfies the above property. In the following, we denote by $D$ the resulting dominating set and by $\mathcal{C}$ the set of $C$-edges not adjacent to $S$ for simplicity. 

Now if there were two edges in $\mathcal{C}$ with a common neighbor $v$, then $S'=S\cup\set{v}$ would be such that $\vert S'\vert=\vert S\vert+1$ and $N_e(D,S')=N_e(D,S)+2$. But then, $N_e(D,S')-\vert S'\vert>N_e(D,S)-\vert S\vert$ thereby contradicting the fact that $S$ belongs to $S(D)$. Thus, any two edges in $\mathcal{C}$ are at distance at least three from one another.

Now denote by $\mathcal{C}_2 \subseteq \mathcal{C}$ the set of all $C$-edges $xy$ for which there exists a vertex in $b \in B$ such that $b$ is complete to the clique in $\mathcal{K}$ containing $xy$. Then by \Cref{NotTooManyCEdgesDominatedByASingleVertex,}, $|\mathcal{C}_2|\leq |A|-1$.

Finally, denote by $\mathcal{C}_1\subseteq\mathcal{C}\setminus \mathcal{C}_2$ the set of edges which contain an induced $P_4$ in their closed neighborhood and suppose that $\mathcal{C}_1$ has cardinality at least $k+1$. Now consider a set $E \subseteq \mathcal{C}_1$ of cardinality $k+1$ and denote by $P_e$ an induced $P_4$ contained in $N[e]$, for any $e \in E$. If any two edges in $E$ were pairwise at distance at least four from one another then $\bigcup_{e \in E} P_e$ would be isomorphic to $(k+1) P_4$, a contradiction. Thus, there exist $xy,x'y' \in E$ such that $d(xy,x'y') = 3$, say $d(xy,x'y') =d(x,x')$ without loss of generality; in particular, there exists an induced path $P$ from $x$ to $x'$ of length three containing exactly two vertices (namely $x$ and $x'$) from two different edges of $E$ (namely $xy$ and $x'y'$). Since this holds for any subset $E \subseteq \mathcal{C}_1$ of cardinality $k+1$, we conclude by \Cref{BigLemma} that $|\mathcal{C}_1|< \left(\frac{k
^2}{2} +\frac{3k}{2}\right)\left(\frac{k^2}{2} +\frac{3k}{2} +1\right)+ k+1$.

%\left(\frac{k^2}{2} -\frac{k}{2} + 2+2\left(k-1\right)\right)\left(2\left(k-1\right)+\frac{k^2}{2} -\frac{k}{2} + 2+1\right)+ k+1$. 

Now consider an edge $xy \in \mathcal{C}\setminus\left(\mathcal{C}_2\cup\mathcal{C}_1\right)$ which is contained in a clique $K\subseteq\mathcal{K}$ and let us show that $N[xy]=N[K]$. Suppose to the contrary that there is a vertex $b\in N[K]\setminus N[xy]$ and let $c\in K$ be a neighbor of $b$. By assumption on $D$, $N[xy]$ cannot be a strict subset of $N[xc]$ or $N[yc]$, so there exists a vertex $p_x\in N(x)$ which is adjacent to neither $y$ nor $c$ and there exists a vertex $p_y\in N(y)$ which is adjacent to neither $x$ nor $c$. As $xy\notin \mathcal{C}_1$ it follows that $p_x,x,y,p_y$ cannot induce a $P_4$ and thus $p_x$ and $p_y$ must be adjacent. But then $p_x,p_y,y,c$ induce a $P_4$, a contradiction. Hence, every edge $xy\in\mathcal{C}\setminus(\mathcal{C}_1\cup\mathcal{C}_2)$ is contained in a clique in $K \in \mathcal{K}$ whose closed neighborhood is $P_4$-free (recall that $N[xy] = N[K]$ by the above property) and for which there exists no $b \in B$ such that $N[K] \subseteq N[b]$. Furthermore by \Cref{P4FreeSuffices}, either $\mathcal{C}\setminus\left(\mathcal{C}_2\cup\mathcal{C}_1\right)$ has cardinality at most $k$ or contains only regular cliques. Thus, all but at most 
\[
2\left(|A|-1\right)+|\mathcal{C}_2|+\left(\frac{k^2}{2} +\frac{3k}{2}\right)\left(\frac{k^2}{2} +\frac{3k}{2} +1\right)+ k+1.
%\left(\frac{k^2}{2} -\frac{k}{2} + 2+2\left(k-1\right)\right)\left(2\left(k-1\right)+\frac{k^2}{2} -\frac{k}{2} + 2+1\right)+ k+1
\]

$C$-edges are contained in regular cliques, which proves the claim since $|\mathcal{C}_2|\leq |A|-1$.
\end{claimproof}

We now present an algorithm which can determine in polynomial-time whether $G$ is a \yes-instance for \contractd\ or not. In the following, we denote by $f(k) = 3\left(|A|-1\right)+\left(\frac{k^2}{2} +\frac{3k}{2}\right)\left(\frac{k^2}{2} +\frac{3k}{2} + 1\right)+ k+1$.

\begin{itemize}
    \item[1.] Determine $A$, $B$, $C$ and $\mathcal{R}$.
     \begin{itemize}
     \item[1.1] If $\mathcal{R}=\varnothing$, then check if there exists a minimum total dominating set of size at most $f(k)$.
     \begin{itemize}
     \item[1.1.1] If the answer is no, then output \yes.
     \item[1.1.2] Else apply Proposition \ref{prop:boundedtdom}(b).
     \end{itemize}
     \item[1.2] Else go to 2.
     \end{itemize}
    \item[2.] Check whether there exist two regular cliques in $\mathcal{R}$ which have distance at most three from one another. If so, output \yes.
    \item[3.] Let $V_1$ be the set of vertices at distance one from $N[\mathcal{R}]$ and let $V_2 = V(G) \setminus (N[\mathcal{R}] \cup V_1)$. If $V_2 = \varnothing$, output \no.
    \item[4.] Determine $\mathcal{S}=\set{S\subseteq V_1 \cup V_2\colon \vert S\vert\leq 2f(k), \forall x\in V_2, N(x)\cap S\neq\varnothing}$. If $\mathcal{S}=\varnothing$, output \yes.
    \item[5.] Let $\mathcal{S}'$ be the family of all sets in $\mathcal{S}$ which have minimum size. 
    \begin{itemize}
    \item[(i)] If there exists a set $S\in\mathcal{S}'$ containing a $P_3$, output \yes. 
    \item[(ii)] If there exists a set $S\in\mathcal{S}'$ such that $S \cap V_1 \neq \varnothing$, output \yes.
    \end{itemize}
    \item[6.] Output \no.
\end{itemize}

Finally, let us show that this algorithm outputs the correct answer. In case $\mathcal{R}=\emptyset$, then \Cref{MostEdgesInRegularCliques} tells us that $G$ is a \yes-instance for \contractd\, if there exists no minimum total dominating set of size at most $f(k)$ (see step 1.1.1). If such a minimum total dominating set exists, then we conclude using Proposition \ref{prop:boundedtdom}(b) (see step 1.1.2). If in step 2, two regular cliques at distance at most three from one another are found then by \Cref{AllRegularCliquesAtDistanceFour}, $G$ is a \yes-instance for \contractd. Otherwise, any two regular cliques have distance at least four to one another and by \Cref{rem:tdset}, there exists a minimum total dominating set $D$ of $G$ such that for any regular clique $K \in \mathcal{R}$, $D \cap N[K] = \{b_K,c_K\}$ where $b _K\in N(K) \cap B$ and $c_K \in K$. In the following, we denote by $D' = \bigcup_{K \in \mathcal{R}} \set{b_K,c_K}$. Note that by \Cref{V'Adjacency}, for any $x \in N[\mathcal{R}] \cup V_1$, $N(x) \cap D' \neq \varnothing$. Now if $V_2 = \varnothing$ then we conclude by \Cref{RegularCliquesHaveTwoVerticesFromD} and the fact that any two regular cliques are at distance at least four from one another, that any minimum total dominating set of $G$ is an induced matching, that is, $G$ is a \no-instance for \contractd\ (see step 3). Otherwise $V_2 \neq \varnothing$ and if $G$ is a \no-instance for \contractd\ then by \Cref{MostEdgesInRegularCliques}, there must exist a set $S \subseteq V_1 \cup V_2$ of cardinality at most $2f(k)$ such that for any $x \in V_2$, $N(x) \cap S \neq \varnothing$. Thus, if $\mathcal{S} = \varnothing$ then $G$ is a \yes-instance for \contractd\ (see step 4). Otherwise $\mathcal{S} \neq \varnothing$, and for any $S \in \mathcal{S}'$, $S \cup D'$ is a minimum total dominating set of $G$. It then follows from \Cref{theorem:1totcontracdom} that if there exists $S \in \mathcal{S}'$ such that $S$ contains a $P_3$ then $G$ is a \yes-instance for \contractd\ (see step 5(i)); otherwise, any $S \in \mathcal{S}'$ is an induced matching and if there exists a set $S \in \mathcal{S}'$ such that $S \cap V_1 \neq \varnothing$ then by \Cref{V'Adjacency}, $S \cup D'$ contains a $P_3$ and so, $G$ is a \yes-instance by \Cref{theorem:1totcontracdom} (see step 5(ii)). Otherwise, for any $S \in \mathcal{S}'$, $S$ is an induced matching and $S \cap V_1 = \varnothing$ which implies that $S \cup D'$ is an induced matching and thus, $G$ is a \no-instance for \contractd. As every step can clearly be done in polynomial time, this concludes the proof.
\end{proof}

%============================================================================

\section{Proof of Theorem~~\ref{thm:dichotomyTD}}
\label{sec-proof}

We here give a proof of our main result, Theorem~\ref{thm:dichotomyTD}.\\

Let $H$ be a graph. If $H$ contains a cycle then \contractd\ is $\mathsf{NP}$-hard on $H$-free graphs by Theorem~\ref{thm:cycles}. Thus, we may assume that $H$ is a forest. Now if $H$ contains a vertex of degree at least three, then $H$ contains an induced claw and so, \contractd\ is $\mathsf{coNP}$-hard on $H$-free graphs by Theorem~\ref{thm:tdclaw}. Assume henceforth that $H$ is a linear forest. If $H$ contains a path on at least six vertices, then \contractd\ is $\mathsf{NP}$-hard on $H$-free graphs by Theorem~\ref{thm:P6P5P2}. Thus, we may assume that every connected component of $H$ induces a path on at most five vertices. Now suppose that $H$ contains a $P_5$. If $H$ has another connected component on more than one vertex then \contractd\ is $\mathsf{NP}$-hard by Theorem~\ref{thm:P6P5P2}. Otherwise, every other connected component of $H$ (if any) contains exactly one vertex in which case \contractd\ is polynomial-time solvable on $H$-free graphs by Theorem~\ref{theorem:p5free} and Proposition~\ref{prop:boundedtdom}. Assume now that $H$ contains a $P_4$. Then if $H$ has another connected component isomorphic to $P_4$, \contractd\ is $\mathsf{coNP}$-hard by Theorem~\ref{thm:2P4}. Otherwise, every other connected component has at most three vertices and we conclude by Theorem~\ref{thm:P4kP3} that \contractd\ is polynomial-time solvable on $H$-free graphs. Finally, if the longest path in $H$ has length at most two then we also conclude by Theorem~\ref{thm:P4kP3} that \contractd\ is polynomial-solvable on $H$-free graphs, which concludes the proof.

%============================================================================

\section{Conclusion}
\label{sec-conclusion}

In this paper, we considered the problem of deciding whether the total domination number of a given graph $G$ can be reduced using exactly one edge contraction (called \contractd\ ). We recall that if we were allowed to use 3 edge contractions, the answer to this problem would always be yes due to a result of Huang et al. (\cite{HX10}, Theorem 4.3). We focused on several graph classes and determined the computational complexity of this problem. By putting together these results, we managed to obtain a complete dichotomy for $H$-free graphs (see Theorem 3).
 
In \cite{PAL-GAL-RIE}, the same problem was considered with respect to the domination number (called \textsc{1-Edge Contraction($\gamma$)}). Here the authors provided an almost dichotomy for $H$-free graphs (see Theorem 2). As mentioned in the introduction, the remaining cases left open have recently been solved. Interestingly, the two problems do not behave the same way on $H$-free graphs from a complexity point of view. This is even more interesting since it has been shown in  \cite{semidom} that the complexities of \textsc{Dominating set} and \textsc{Total dominating set} agree on  $H$-free graphs for any graph $H$.
 
In fact, it was even shown in \cite{semidom} that the complexities of \textsc{Dominating set}, \textsc{Semi-total dominating set} (given a graph $G$ and an integer $k$, does there exist a dominating set $S\subseteq V(G)$ with $\vert S\vert\leq k$ such that every vertex in $S$ is at distance at most two to another vertex in $S$) and \textsc{Total dominating set} agree on  $H$-free graphs for any graph $H$. Thus, it would be of interest to look at our problem with respect to the semi-total domination number and find out how it behaves on $H$-free graphs. A few of our results can be adapted to the semi-total case, but in order to obtain a complete dichotomy for $H$-free graphs, new approaches are needed.

%============================================================================

\bibliographystyle{siam}
\bibliography{bibliography}

\end{document}